\documentclass[journal,onecolumn,12pt]{IEEEtran}

\usepackage{graphicx}
\usepackage{amssymb}
\usepackage{epstopdf}
\usepackage[cmex10]{amsmath}
\usepackage{amsthm}
\usepackage{enumerate}
\usepackage{eufrak}
\usepackage{cite}
\usepackage{mathcomp}
\usepackage{supertabular}
\usepackage{longtable}
\usepackage{stmaryrd}
\usepackage{url}
\usepackage{color}
\usepackage{rotating}
\usepackage{float}
\usepackage[boxed]{algorithm2e}
\usepackage{array}
\usepackage{placeins}
\usepackage{multicol}
\usepackage{setspace}
\onehalfspacing
\usepackage{circuitikz}


%
%

\newtheorem{proposition}{Proposition}[section]
\newtheorem{theorem}{Theorem}[section]
\newtheorem{corollary}{Corollary}[section]
\newtheorem{lemma}{Lemma}[section]
\newtheorem{definition}{Definition}[section]

\newtheorem{example}{Example}[section]
\newtheorem{construction}{Construction}[section]

\newcommand{\F}{\mathcal {F}}
\newcommand{\A}{\mathcal {A}}

\renewcommand{\H}{\mathcal {H}}
\newcommand{\bz}{{\Bbb Z}}
\newcommand{\B}{{\cal B}}
\newcommand{\C}{{\cal C}}
\newcommand{\D}{{\cal D}}
\newcommand{\E}{{\cal E}}
\newcommand{\CH}{{\cal H}}

\newcommand{\CP}{{\cal P}}
\newcommand{\CT}{{\cal T}}

\newcommand{\su}{{\sf u}}
\newcommand{\sv}{{\sf v}}

\newcommand{\vn}{{\mathbf{n}}}
\newcommand{\vw}{{\mathbf{w}}}
\newcommand{\vX}{{\mathbf{X}}}
\newcommand{\vx}{{\mathbf{x}}}
\newcommand{\vy}{{\mathbf{y}}}
\newcommand{\vk}{{\mathbf{k}}}
\newcommand{\vt}{{\mathbf{t}}}
\newcommand{\vT}{{\mathbf{T}}}
\newcommand{\vA}{{\mathbf{A}}}
\newcommand{\vB}{{\mathbf{B}}}

\ifCLASSINFOpdf
\else
\fi
\hyphenation{op-tical net-works semi-conduc-tor}

\begin{document}
%
\title{Constructions of Optimal and Near-Optimal Multiply Constant-Weight Codes}

\author{Yeow~Meng~Chee,~\IEEEmembership{Senior Member, IEEE},
        Han~Mao~Kiah,
        Hui~Zhang,
        and~Xiande~Zhang
\thanks{Y. M. Chee, H. Zhang and X. Zhang
are with the Division~of~Mathematical Sciences,
  School~of~Physical~and~Mathematical~Sciences,
  Nanyang~Technological~University, 21~Nanyang~Link, Singapore~637371,
  Singapore (emails:\{YMChee, huizhang, XiandeZhang\}@ntu.edu.sg).}
\thanks{Han Mao Kiah is with Coordinated Science Lab,
University of Illinois, Urbana-Champaign, 1308 W. Main Street,
Urbana, IL 61801, USA,
    (email:{hmkiah@illinois.edu}).}
}

\maketitle

\begin{abstract}
Multiply constant-weight codes (MCWCs) have been recently studied to
improve the reliability of certain physically unclonable function
response. In this paper, we give combinatorial constructions for
MCWCs which yield several new infinite families of optimal MCWCs.
Furthermore, we demonstrate that the Johnson type upper bounds of MCWCs are
asymptotically tight for fixed weights and distances.
Finally, we provide bounds and constructions of two dimensional MCWCs.
\end{abstract}

\begin{IEEEkeywords}
multiply constant-weight codes, physically unclonable functions,
generalized packing designs, two dimensional multiply
constant-weight codes.
\end{IEEEkeywords}

%
\IEEEpeerreviewmaketitle

\section{Introduction}

A multiply constant-weight code (MCWC) is a binary code of length
$mn$ where each codeword is partitioned into $m$ equal parts and has
weight exactly $w$ in each part
\cite{Cheeetal:MCWC,CDGKS}.
This definition
therefore generalizes the class of {\em constant-weight codes}
(where $m=1$) and a subclass of {\em doubly constant-weight codes}
(where $m=2$), introduced by Johnson\cite{Johnson1972DM} and
Levenshte{\u\i}n \cite{Levenshtein:1971}.

Multiply constant-weight codes have attracted recent
attention due to an application in the design of
certain {\em physically unclonable functions} (PUFs)
\cite{Cheeetal:MCWC, Cherifetal:2012, CDGKS}.
Introduced by Pappu {\em et al.} \cite{Pappuetal:2002}, PUFs provide
innovative low-cost authentication methods that are derived from
complex physical characteristics of electronic devices 
 and have recently become an attractive option to provide security in low
cost devices such as RFIDs and smart cards
\cite{Cherifetal:2012,Gassendetal:2002,Pappuetal:2002,SuhDevadas:2007}.
Reliability and implementation considerations on programmable
circuits for the design of Loop PUFs \cite{Cherifetal:2012} lead to
the study of multiply constant-weight codes.


If we arrange each codeword in an MCWC as an $m\times n$
array, then an MCWC 
can be regarded as a code over binary matrices, where each matrix
has constant row weight $w$.
Furthermore, if certain weight constraints are also satisfied by all columns,
 we obtain {\em two-dimensional weight-constrained codes} that are
 used in the design of optical storage in holographic
memory \cite{Ordentlichetal:2000}, and
have possible applications in
next generation memory technologies based on crossbar arrays of
resistive devices, such as memristors \cite{Ordentlichetal:2012}.
Recently, these codes were also studied by Chee {\em et al.}
\cite{Cheeetal:2013c}  in an application for power line
communications.

The theory of MCWC is at a rudimentary stage.
Chee {\em et al.} established certain preliminary upper and lower bounds
for possible sizes of MCWCs \cite{Cheeetal:MCWC}. In particular,
they extended techniques of Johnson \cite{Johnson1972DM} to derive
certain upper bounds and showed that these bounds are asymptotically
tight to a constant factor. However, the only nontrivial infinite
class of optimal codes was constructed from Reed-Solomon codes under
the conditions that $n/w\ge mw-1$ is a prime power and $d\ge
2m(w-1)+2$.


In this paper, we continue this study and provide constructions of
optimal or near-optimal MCWCs based on combinatorial techniques.
Our main contributions are as follows:
\begin{enumerate}
\item[(i)] determining completely the maximum sizes of MCWCs with total weight four and distance four;
\item[(ii)] constructing infinite families of optimal MCWCs with distance four and weight two or three in each part;
\item[(iii)] establishing that the Johnson type upper bounds are asymptotically tight for fixed weight and distances.
\end{enumerate}

Our paper is organized as follows.
 In Section~\ref{pre}, we give necessary definitions and notation, where
a more general concept of MCWCs with different lengths and
 weights in different parts is introduced.
Section~\ref{sweight} links the general MCWCs to generalized packing
designs, a class of combinatorial objects recently studied by Bailey
and Burgess \cite{BB2011DM}. By establishing the existence of
optimal generalized packing designs, we completely determine the
maximum sizes of MCWCs with total weight four and distance four.
Combining results in \cite{BB2011DM}, the maximum sizes of MCWCs
with total weight less than or equal to four are determined except
for only one small case.
 Section~\ref{d4} gives a product construction of
 MCWCs with equal length and equal weight in each part. Based on the existence of
  large sets of optimal packings, this construction yields infinite families of optimal MCWCs with distance four
 and weight two or three in each part.
Section~\ref{asysec} exhibits that the Johnson type upper bounds of
general MCWCs are asymptotically tight for fixed  weights and
distance by applying a theorem on fractional matchings due to Kahn
\cite{kahn:1996}. This improves the previous result in
\cite{Cheeetal:MCWC}
which states that the bounds are asymptotically
tight to a constant factor and for a smaller class of MCWCs.
Finally in Section~\ref{2dmcwc}, we
formally define the notion of two dimensional MCWCs and
provide several bounds and constructions.

\section{Preliminaries}\label{pre}
Given a positive integer $n$, the set $\{1,2,\ldots,n\}$ is denoted
by $I_n$, and the ring $\bz/n\bz$ is denoted by $\bz_n$.

Through out this paper, let $R=\{0,1\}$ and let $X$ be a finite set
of size $N$. Then $R^X$ denotes the set of all binary vectors of
length $N$, where each component of a vector $\su\in R^X$ is indexed
by an element of $X$, that is, $\su=(\su_x)_{x\in X}$, and $\su_x\in
R$ for each $x\in X$.

A {\em binary code of length $N$} is a set $\C\subseteq R^X$ for
some $X$ of size $N$. The elements of $\C$ are called {\em
codewords}. The {\em Hamming norm} of a vector $\su\in R^X$ is
defined as $\|\su\|=|\{x\in X: \su_x=1\}|$. The distance induced by
this norm is called the {\em Hamming distance}, denoted $d_H$, so
that $d_H(\su,\sv) = \|\su-\sv\|$, for $\su,\sv\in R^X$. A code $\C$
is said to have (minimum Hamming) {\em distance} $d$ if
$d_H(\su,\sv)\geq d$ for all distinct $\su,\sv\in \C$, denoted by
$(N,d)$ code. The largest size of an $(N,d)$ code is denoted by
$A(N,d)$.

Suppose that $X_1,X_2,\ldots,X_m$ is a partition of $X$ with
$|X_i|=n_i$, $i\in I_m$. Clearly, $N=n_1+\cdots +n_m$. Define
$\vX=(X_1,X_2,\ldots,X_m)$ and $\vn=(n_1,n_2,\ldots,n_m)$. For any
vector $\su\in R^X$, define the {\em support of $\su$ associated
with $\vX$} as ${\rm supp}(\su)_{\vX}=(U_1,U_2,\ldots,U_m)$, where
$U_i=\{x\in X_i : \su_x =1\}$. If the subscript $\vX$ is omitted,
then ${\rm supp}(\su)=\cup_{i\in I_m}U_i$ is the usual support set.

Let $\vx=(x_1,\ldots,x_m)$ and $\vy=(y_1,\ldots,y_m)$ be two
$m$-tuples of nonnegative integers. We write $\vx\leq \vy$ to mean
that $x_i\leq y_i$ for all $i\in I_m$.  For any integer $k$, use the
notation $\binom{X}{k}$ to denote the set of all $k$-subsets of $X$.
Suppose $\vw=(w_1,w_2,\ldots,w_m)$ is an $m$-tuple of integers such
that $\vw\leq \vn$. Let $W \triangleq w_1+\cdots+w_m$.
 Define
\[\binom{\vX}{\vw}=\binom{X_1}{w_1}\times \binom{X_2}{w_2} \times \cdots \times \binom{X_m}{w_m},\]
i.e., a member of $\binom{\vX}{\vw}$ is an $m$-tuple of sets, of
sizes $(w_1,w_2,\ldots,w_m)$. Let $\C\subseteq R^X$ be an $(N,d)$
code. If for each $\su\in \C$, ${\rm supp}(\su)_{\vX} \in
\binom{\vX}{\vw}$, then $\C$ is said to be of {\em multiply
constant-weight}, denoted by MCWC$(\vn,d,\vw)$. The number of
codewords in an MCWC$(\vn,d,\vw)$ is called the {\em size} of the
code. The maximum size of an MCWC$(\vn,d,\vw)$ is denoted
$T(\vn,d,\vw)$, and the
 MCWC$(\vn,d,\vw)$ achieving this size is said to be
{\em optimal}.

Specifically, when $m=1$, an MCWC$(\vn,d,\vw)$ is a {\em
constant-weight code}, denoted by CWC$(n,d,w)$ with $n=n_1$ and
$w=w_1$; when $m=2$, an MCWC$(\vn,d,\vw)$ is a {\em doubly
constant-weight code} \cite{Etzion:2008}. The largest size of a
CWC$(n,d,w)$ is denoted by $A(n,d,w)$. For general $m$, when $n_i=n$
and $w_i=w$ for all $i\in I_m$, $\C$ is denoted by MCWC$(m, n,d,w)$.
In this case, we use the notation $M(m,n,d,w)$ to denote the maximum
size of such a code.
 Observe that by definition,
\[M(1,n,d,w)=A(n,d,w), M(2,n,d,w)=T((n,n),d,(w,w)).\]
Moreover, the functions $A(n,d,w)$ and $T((n,n),d,(w,w))$ have been
well studied, see for example
\cite{Agrelletal:2000,BSJSS1990IEEETIT,Etzion:2008,Johnson1972DM,SHP2006EJC}.
For the general case $T(\vn,d,\vw)$, some lower and upper bounds
were studied in \cite{Cheeetal:MCWC}. The techniques of
 Johnson bound have been applied to get the following recursive bounds in \cite{Cheeetal:MCWC,CDGKS}.

 \vskip 10pt
\begin{proposition}
\label{rb} We have
\begin{enumerate}
\item For each $i\in
I_m$, $T(\vn,d,\vw)\leq \lfloor
\frac{n_i}{w_i}T(\vn',d,\vw')\rfloor$ where
$\vn'=(n_1,\ldots,n_{i-1},n_i-1,n_{i+1},\ldots,n_m)$ and
$\vw'=(w_1,\ldots,w_{i-1},w_i-1,w_{i+1},\ldots,w_m)$.
\item $M(m,n,d,w)\leq
\left\lfloor\frac{n^m}{w^m}M(m,n-1,d,w-1) \right\rfloor$.
\end{enumerate}

\end{proposition}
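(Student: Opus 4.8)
The proof is a standard "puncturing" / "averaging" argument of Johnson type, applied part‑by‑part. Let me sketch both items.

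For item (1), fix $i \in I_m$ and let $\C$ be an optimal $\mathrm{MCWC}(\vn,d,\vw)$, so $|\C| = T(\vn,d,\vw)$. The idea is to estimate, via double counting, the number of pairs $(\su, x)$ where $\su \in \C$ and $x \in X_i$ with $\su_x = 1$. Since every codeword has weight exactly $w_i$ in part $X_i$, this count equals $|\C|\, w_i$. On the other hand, for each fixed $x \in X_i$, consider the subcode $\C_x = \{\su \in \C : \su_x = 1\}$. Delete the coordinate $x$ from every codeword of $\C_x$: this produces a code on the point set $\vX' = (X_1,\dots,X_i\setminus\{x\},\dots,X_m)$, with weight vector $\vw' = (w_1,\dots,w_i-1,\dots,w_m)$, and the Hamming distance between any two (distinct) codewords is unchanged because all of them had a $1$ in the deleted coordinate — so the punctured code is an $\mathrm{MCWC}(\vn',d,\vw')$ of size $|\C_x|$. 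Hence $|\C_x| \le T(\vn',d,\vw')$. Summing over $x \in X_i$ gives $|\C| w_i = \sum_{x\in X_i} |\C_x| \le n_i\, T(\vn',d,\vw')$, i.e. $T(\vn,d,\vw) \le \frac{n_i}{w_i} T(\vn',d,\vw')$; since the left side is an integer we may take the floor.

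For item (2), apply item (1) successively to each of the $m$ coordinates. Starting from $M(m,n,d,w) = T((n,\dots,n),d,(w,\dots,w))$, one application to coordinate $1$ gives $T((n,\dots,n),d,(w,\dots,w)) \le \frac{n}{w} T((n-1,n,\dots,n),d,(w-1,w,\dots,w))$. Bounding the remaining coordinates the same way, one at a time, yields $T((n,\dots,n),d,(w,\dots,w)) \le \frac{n^m}{w^m} T((n-1,\dots,n-1),d,(w-1,\dots,w-1)) = \frac{n^m}{w^m} M(m,n-1,d,w-1)$. Taking the floor at the end gives the claimed bound. (One should keep the floor only at the last step, since intermediate quantities $\frac{n^j}{w^j}M(\cdots)$ need not be integers; the final right‑hand side is, after flooring, a valid integer upper bound for the integer $M(m,n,d,w)$.)

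The only subtlety — and the one point worth stating carefully — is the distance‑preservation claim in the puncturing step: deleting a coordinate in which \emph{all} surviving codewords agree (here, all equal to $1$) cannot decrease the Hamming distance, and in fact preserves it exactly, since the deleted coordinate contributed $0$ to $d_H(\su,\sv)$ for all $\su,\sv \in \C_x$. This is what makes the punctured code a legitimate $\mathrm{MCWC}(\vn',d,\vw')$ with distance at least $d$. Everything else is routine double counting, so I do not anticipate any real obstacle.
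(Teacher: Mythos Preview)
Your argument is correct and is exactly the standard Johnson-type puncturing/double-counting proof; note, however, that the paper does not actually prove this proposition but simply cites it from \cite{Cheeetal:MCWC,CDGKS}, so there is no in-paper proof to compare against. Your sketch matches what one expects those references to contain.
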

\vskip 10pt

Since all the codes we consider in this paper are binary, we can
assume that the distance $d$ is even and let $d\triangleq 2\delta$
for convenience. Trivial cases are as follows.

\begin{lemma}\label{spread} If $d>2 \sum_{i\in I_m} w_i$, then
$T(\vn,d,\vw)=1$; if  $d=2 \sum^m_{i=1} w_i$, then
$T(\vn,d,\vw)=\min_{i\in I_m} \lfloor\frac{n_i}{w_i} \rfloor$; if
$d=2$, then $T(\vn,d,\vw)=\prod_{i\in I_m} {n_i\choose w_i}$.
\end{lemma}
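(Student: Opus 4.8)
The plan is to reduce all three claims to the elementary identity expressing Hamming distance through support intersection. For any $\su,\sv\in R^X$ with ${\rm supp}(\su)_{\vX},{\rm supp}(\sv)_{\vX}\in\binom{\vX}{\vw}$ we have $\|\su\|=\|\sv\|=W$, and since $\su,\sv$ are exactly the indicator vectors of their supports,
\[ d_H(\su,\sv)=\|\su\|+\|\sv\|-2\,|{\rm supp}(\su)\cap{\rm supp}(\sv)|=2W-2\,|{\rm supp}(\su)\cap{\rm supp}(\sv)|. \]
Hence $d_H(\su,\sv)$ is always an even integer with $0\le d_H(\su,\sv)\le 2W$, and $d_H(\su,\sv)=2W$ precisely when ${\rm supp}(\su)$ and ${\rm supp}(\sv)$ are disjoint. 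The first claim is immediate: if $d>2W$, the identity forces $d_H(\su,\sv)<d$ for every pair of distinct codewords, so any MCWC$(\vn,d,\vw)$ has at most one codeword, while $\vw\le\vn$ guarantees one exists; thus $T(\vn,d,\vw)=1$.

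Next suppose $d=2W$. By the identity, a family of codewords has all pairwise distances equal to $d$ if and only if their supports are pairwise disjoint. Projecting to the $i$-th part, this is equivalent to the associated $w_i$-subsets of $X_i$ being pairwise disjoint, and a pairwise disjoint family of $w_i$-subsets of an $n_i$-set has size at most $\lfloor n_i/w_i\rfloor$; imposing this for every $i$ gives $T(\vn,d,\vw)\le\min_{i\in I_m}\lfloor n_i/w_i\rfloor$. For the reverse inequality, put $t=\min_{i\in I_m}\lfloor n_i/w_i\rfloor$; in each $X_i$ choose pairwise disjoint $w_i$-subsets $U_i^{(1)},\ldots,U_i^{(t)}$ (possible since $tw_i\le n_i$) and let the $j$-th codeword be the vector with support $(U_1^{(j)},\ldots,U_m^{(j)})$. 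These $t$ codewords have pairwise disjoint supports, hence pairwise Hamming distance $2W=d$, so they form an MCWC$(\vn,d,\vw)$ of size $t$.

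Finally, if $d=2$, the identity shows that $d_H$ takes only even values on $\binom{\vX}{\vw}$, so any two distinct members already lie at distance at least $2=d$. Therefore the full set $\{\su\in R^X:{\rm supp}(\su)_{\vX}\in\binom{\vX}{\vw}\}$ is itself an MCWC$(\vn,2,\vw)$, and no code can exceed it, giving $T(\vn,2,\vw)=\left|\binom{\vX}{\vw}\right|=\prod_{i\in I_m}\binom{n_i}{w_i}$.

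There is essentially no obstacle here: the lemma is a direct consequence of the distance–intersection identity. The only point that deserves a moment's care is the case $d=2W$, where one must observe that disjointness of the entire supports is equivalent to disjointness within each part separately; this is exactly what makes the extremal size a minimum over the $m$ parts and lets the optimal code be built by packing each part independently with disjoint blocks.
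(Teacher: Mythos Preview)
Your proof is correct. The paper does not actually supply a proof of this lemma --- it is introduced as ``Trivial cases are as follows'' and left unjustified --- so your argument is precisely the elementary verification the authors omit: the distance--intersection identity $d_H(\su,\sv)=2W-2|{\rm supp}(\su)\cap{\rm supp}(\sv)|$ for words of equal weight, the observation that disjoint supports factor through the parts, and the packing bound $\lfloor n_i/w_i\rfloor$ in each part.
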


%
%

%
\vskip 10pt


\section{Optimal MCWCs with Small Weight}\label{sweight}
We construct optimal multiply constant-weight codes with total
weight four in this section. Our approach is based on combinatorial
design theory. First, we introduce necessary concepts and establish
connections to multiply constant-weight codes.

\subsection{Generalized packing designs}

A {\em set system} is a pair $(X,\B)$ such that $X$ is a finite set
of {\em points} and $\B$ is a set of subsets of $X$, called {\em
blocks}. The {\em order} of the set system is $|X|$, the number of
points. For  a set $K$ of  nonnegative integers, a set system
$(X,\B)$ is said to be $K$-{\em uniform} if $|B|\in K$ for all
$B\in\B$. When $K=\{k\}$, we simply write that the system is
k-uniform.

Let $N\geq t$ and $\lambda\geq 1$. A $t$-$(N,K,\lambda)$ {\em
packing} is a $K$-uniform set system $(X,\B)$ of order $N$, such
that every $t$-subset of $X$ occurs in at most $\lambda$ blocks in
$\B$. If $K=\{k\}$ and each pair occurs in exactly $\lambda$ blocks,
then we call it a {\em balanced incomplete block design}, and denote
it by BIBD$(N,k,\lambda)$. A BIBD$(N,3,1)$ is known as a {\em
Steiner triple systems} of order $N$, denoted by STS$(N)$.


In \cite{Cameron:2009}, Cameron introduced a new class of
combinatorial designs, which simultaneously generalizes various
well-known classes of designs, including $t$-designs, mutually
orthogonal Latin squares, orthogonal arrays and $1$-factorizations
of complete graphs. In a recent paper \cite{BB2011DM}, Bailey and
Burgess considered the analogue of Cameron's generalization for
packings, which are called {\em generalized packing designs}. To
define generalized packing designs, we require various pieces of
notation and terminology.

If $\vA=(A_1,\ldots,A_m)$ and $\vB=(B_1,\ldots,B_m)$  are $m$-tuples
of sets, we write $\vA\subseteq\vB$ to mean that $A_i\subseteq B_i$
for all $i\in I_m$, and say $\vA$ is contained in $\vB$.

Again let $\vX=(X_1,X_2,\ldots,X_m)$ and $\vn=(n_1,n_2,\ldots,n_m)$
as in Section~\ref{pre}. Assume that $\vk=(k_1,\ldots,k_m)$ is an
$m$-tuple of positive integers with sum $k$ such that $\vk\leq \vn$.
Let $\vt=(t_1,\ldots,t_m)$ be an $m$-tuple of non-negative integers.
We say $\vt$ is {\em $(\vk, t)$-admissible} if $\vt\leq \vk$ and
$\sum_{i\in I_m} t_i=t$. In a similar manner, if
$\vT=(T_1,\ldots,T_m)$ is an $m$-tuple of disjoint sets, we say that
$\vT$ is {\em $(\vn,\vk,t)$-admissible} if each $T_i$ is a
$t_i$-subset of $X_i$, where $(t_1,\ldots,t_m)$ is $(\vk,
t)$-admissible. Note that since $t_i$ is allowed to be zero, the
corresponding set $T_i$ is allowed to be empty.

A {\em $t$-$(\vn, \vk, \lambda)$ generalized packing design}, or
more succinctly a {\em generalized packing}, is a pair $(\vX,\CP)$,
such that $\CP$ is a family of elements of $\binom{\vX}{\vk}$,
called {\em blocks}, with the property that every
$\vT=(T_1,\ldots,T_m)$ which is $(\vn,\vk,t)$-admissible is
contained in at most $\lambda$ blocks in $\CP$. As with ordinary
packings, the {\em generalized packing number}
$D_{\lambda}(\vn,\vk,t)$  is the maximum possible number of blocks
in a $t$-$(\vn, \vk, \lambda)$ generalized packing. When $\vn$ and
$\vk$ have only one component, we simply write $D_{\lambda}(N,k,t)$,
which is the usual packing number. When $\lambda$ is omitted, we
mean $\lambda=1$. The equivalence between multiply constant-weight
codes and generalized packing designs is obvious.

\begin{proposition}
\label{packing2code} There exists an MCWC$(\vn,d,\vw)$ of size $b$
if and only if a $t$-$(\vn, \vw, 1)$ generalized packing of size $b$
exists, where $t=W-\delta+1$.
\end{proposition}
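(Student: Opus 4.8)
The plan is to set up the evident support-bijection between codewords and blocks, and then to check that the minimum-distance constraint on the code is exactly the generalized-packing constraint with parameter $t=W-\delta+1$. Throughout I may assume $2\le d\le 2W$, equivalently $1\le\delta\le W$, so that $t$ is a positive integer with $t\le W$; the complementary range $d>2W$ is one of the trivial cases already recorded in Lemma~\ref{spread}, and can be set aside at the outset.

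First I would note that a binary vector is completely determined by its support, so $\su\mapsto{\rm supp}(\su)_{\vX}$ is a bijection from the set of vectors of $R^X$ of multiply constant-weight $\vw$ onto $\binom{\vX}{\vw}$. Consequently a code $\C$ consisting of such vectors is carried, size-preservingly, to a family $\CP$ of elements of $\binom{\vX}{\vw}$, and conversely; it then remains only to match the two defining conditions under this correspondence.

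Next I would express the Hamming distance through supports. If ${\rm supp}(\su)_{\vX}=(U_1,\dots,U_m)$ and ${\rm supp}(\sv)_{\vX}=(V_1,\dots,V_m)$, then inside the $i$th part $\su$ and $\sv$ disagree on exactly $|U_i\triangle V_i|=2w_i-2|U_i\cap V_i|$ coordinates, so
\[ d_H(\su,\sv)=\sum_{i\in I_m}\bigl(2w_i-2|U_i\cap V_i|\bigr)=2W-2\sum_{i\in I_m}|U_i\cap V_i|. \]
Hence $d_H(\su,\sv)\ge d=2\delta$ if and only if $\sum_{i\in I_m}|U_i\cap V_i|\le W-\delta$. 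Thus $\C$ has distance at least $d$ exactly when every pair of distinct blocks $\vB=(B_1,\dots,B_m)$ and $\vB'=(B_1',\dots,B_m')$ of $\CP$ satisfies $\sum_{i\in I_m}|B_i\cap B_i'|\le W-\delta$.

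Finally I would match this with the packing condition. A $(\vn,\vw,t)$-admissible tuple $\vT=(T_1,\dots,T_m)$ is contained in both $\vB$ and $\vB'$ precisely when $T_i\subseteq B_i\cap B_i'$ for every $i$; since $|B_i\cap B_i'|\le w_i$ holds automatically, such a $\vT$ with $\sum_i|T_i|=t$ can be chosen if and only if there are integers $0\le t_i\le|B_i\cap B_i'|$ with $\sum_i t_i=t$, which is possible exactly when $\sum_{i\in I_m}|B_i\cap B_i'|\ge t=W-\delta+1$. Therefore $\CP$ fails to be a $t$-$(\vn,\vw,1)$ generalized packing (some admissible $\vT$ lying in two blocks) if and only if some pair of distinct blocks has $\sum_i|B_i\cap B_i'|>W-\delta$, which by the previous paragraph is exactly the failure of the distance requirement. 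Combined with the size-preserving bijection, this yields the stated equivalence. I do not anticipate a genuine obstacle here: the proof is a dictionary translation, and the only spots deserving a line of care are the greedy construction of the $t_i$ in the last step (available since their capacities sum to at least $t$) and the bookkeeping for the degenerate parameter range $d>2W$ noted above.
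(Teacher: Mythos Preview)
Your argument is correct. The paper itself does not supply a proof of Proposition~\ref{packing2code}: it simply remarks that ``the equivalence between multiply constant-weight codes and generalized packing designs is obvious'' and states the proposition without further justification. Your write-up is precisely the natural verification that is being left to the reader, via the support bijection and the identity $d_H(\su,\sv)=2W-2\sum_i|U_i\cap V_i|$; so there is nothing to compare beyond noting that you have spelled out what the paper deemed evident.
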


\begin{corollary}
\label{packingbound} $T(\vn,d,\vw)= D(\vn,\vw,t)$ where
$t=W-\delta+1$.
\end{corollary}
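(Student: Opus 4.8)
The plan is to derive this as an immediate consequence of Proposition~\ref{packing2code}. Set $t = W - \delta + 1$. By definition $T(\vn,d,\vw)$ is the largest $b$ for which an MCWC$(\vn,d,\vw)$ of size $b$ exists, while $D(\vn,\vw,t)$ is the largest $b$ for which a $t$-$(\vn,\vw,1)$ generalized packing of size $b$ exists; both are finite since a codeword lies in the finite set $R^X$ and a block lies in the finite set $\binom{\vX}{\vw}$. Proposition~\ref{packing2code} asserts that these two sets of admissible values of $b$ coincide, so their maxima coincide, which is exactly $T(\vn,d,\vw) = D(\vn,\vw,t)$.

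For completeness I would recall the size-preserving correspondence underlying Proposition~\ref{packing2code}, since this is where the value $t = W-\delta+1$ originates. Given an MCWC$(\vn,d,\vw)$ code $\C$, send each codeword $\su$ to $\mathrm{supp}(\su)_{\vX} \in \binom{\vX}{\vw}$; a binary vector is determined by its support, so this map is injective and yields a family $\CP \subseteq \binom{\vX}{\vw}$ with $|\CP| = |\C|$, and conversely every family of blocks in $\binom{\vX}{\vw}$ is the image of some code. The only thing to verify is that the minimum-distance condition on $\C$ is equivalent to the packing condition on $\CP$.

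The crux is a single identity. If $\su,\sv \in \C$ have supports $(U_1,\ldots,U_m)$ and $(V_1,\ldots,V_m)$, then $|U_i| = |V_i| = w_i$ for every $i$, so
\[
d_H(\su,\sv) = \sum_{i\in I_m} |U_i \triangle V_i| = 2W - 2\sum_{i\in I_m} |U_i\cap V_i|.
\]
Hence $d_H(\su,\sv) \ge d = 2\delta$ fails precisely when $\sum_{i\in I_m} |U_i\cap V_i| \ge W-\delta+1 = t$. In that case one picks nonnegative integers $t_i \le |U_i\cap V_i| \le w_i$ with $\sum_i t_i = t$ and $t_i$-subsets $T_i \subseteq U_i\cap V_i$; then $\vT = (T_1,\ldots,T_m)$ is $(\vn,\vw,t)$-admissible and is contained in both blocks, so $\CP$ is not a $t$-$(\vn,\vw,1)$ generalized packing. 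Conversely, an admissible $\vT$ of total size $t$ lying in two blocks forces $\sum_i |U_i\cap V_i| \ge \sum_i t_i = t > W-\delta$, hence $d_H < d$ for the corresponding codewords. Thus the support map is a size-preserving bijection between MCWC$(\vn,d,\vw)$ codes and $t$-$(\vn,\vw,1)$ generalized packings.

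There is no genuine obstacle here: the argument is a direct computation. The only point needing a moment's care is the elementary observation that a sum $\sum_i |U_i\cap V_i| \ge t$ of intersections, each bounded by $w_i$, can be trimmed to a $(\vn,\vw,t)$-admissible tuple of total size exactly $t$, together with keeping straight the off-by-one between the threshold $W-\delta$ in the distance inequality and the value $t = W-\delta+1$.
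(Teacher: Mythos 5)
Your proposal is correct and follows exactly the route the paper intends: the corollary is an immediate consequence of Proposition~\ref{packing2code}, whose underlying support-map equivalence (via the identity $d_H(\su,\sv)=2W-2\sum_i|U_i\cap V_i|$ and the threshold $t=W-\delta+1$) the paper declares obvious and you simply spell out. Your distance computation and the trimming to an admissible $\vT$ of total size exactly $t$ are both accurate, so there is nothing to add.
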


A few general upper bounds for $D(\vn,\vk,t)$ can be found in
\cite{BB2011DM}. The cases for $t=2$ and $k=3$ or $4$ were
completely determined except when $\vk=(2,2)$, $n_1,n_2$ are both
odd and $n_1\leq n_2\leq 2n_1-1$ \cite{BB2011DM}. When $t=3$ and
$k=3$, the designs are trivial.  In the following subsection,
we determine $D(\vn,\vk,3)$ for $k=4$ completely. Note that the
upper bounds for all cases discussed later could be easily obtained
by applying techniques of Johnson. 


\subsection{Determination of $D(\vn,\vk,3)$ for $k=4$} We split the problem
 into five cases.

\vspace{0.2cm} Case 1: $\vk=(4)$. \vspace{0.1cm}

Let $\vn=(n)$, then a $3$-$(\vn, \vk, 1)$ generalized packing design
is indeed a $3$-$(n, 4, 1)$ packing, for which the determination of
packing numbers $D(n,4,3)$ has been completed by Bao and Ji in
\cite{BJ2014}. Hence, we have the following result.

\begin{proposition}\label{pn4}
For any positive integer $n$,
\begin{align*}
D((n),(4),3)=\begin{cases}
\left\lfloor\frac{n}{4}\left\lfloor\frac{n-1}{3}\left\lfloor\frac{n-2}{2}\right\rfloor\right\rfloor\right\rfloor & n\not\equiv 0\pmod{6}, \\
\left\lfloor\frac{n}{4}\left\lfloor\frac{n-1}{3}\left\lfloor\frac{n-2}{2}\right\rfloor\right\rfloor-1\right\rfloor & n\equiv 0\pmod{6}. \\
\end{cases}
\end{align*}
\end{proposition}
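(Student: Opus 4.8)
The statement lies at the easy end of the five cases, and the plan is to recognise it as a known result about ordinary combinatorial packings. First I would unwind the definitions of Section~\ref{pre}: when $\vn=(n)$ and $\vk=(4)$ each have a single component, an $(\vn,\vk,3)$-admissible $\vT$ is simply an arbitrary $3$-subset of the underlying $n$-set, and a $3$-$(\vn,\vk,1)$ generalized packing is exactly a $3$-$(n,4,1)$ packing in the classical sense --- a family of $4$-subsets no two of which share a common $3$-subset. Hence $D((n),(4),3)=D(n,4,3)$, and the proposition is precisely the determination of this packing number carried out by Bao and Ji \cite{BJ2014}, which may be quoted directly. For completeness I would also indicate how the two matching bounds behind that result are obtained.

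For the upper bound I would run the standard link argument. Fix a point $x$; the blocks containing $x$, with $x$ deleted, form a family of $3$-subsets of the remaining $n-1$ points no two of which meet in a pair, i.e.\ a partial Steiner triple system, so there are at most $D(n-1,3,2)$ of them. Summing the number of blocks through $x$ over all points $x$ counts each block four times, so $4\,D(n,4,3)\le n\,D(n-1,3,2)$, hence $D(n,4,3)\le\lfloor\frac{n}{4}D(n-1,3,2)\rfloor$. I would then substitute the known value of the maximum partial triple system: $D(m,3,2)=\lfloor\frac{m}{3}\lfloor\frac{m-1}{2}\rfloor\rfloor$, except when $m\equiv5\pmod{6}$, in which case the leave must have all even degrees and so cannot consist of a single edge, forcing one less. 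Since $n-1\equiv5\pmod{6}$ exactly when $n\equiv0\pmod{6}$, this produces the two-case formula of the proposition.

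The real work is the lower bound: producing, for every $n$, a $3$-$(n,4,1)$ packing attaining that bound. My approach would be to take a Steiner quadruple system $\mathrm{SQS}(n)$ as the backbone when $n\equiv2,4\pmod{6}$ (where it already meets the bound), and for the other residue classes to build near-optimal packings by adjoining or deleting a constant number of points, by group-divisible and PBD-closure recursions, and by disposing of a finite list of small orders by hand or by computer. The hard part is exactly this construction side: making the recursions interlock so that the leave is as small as the nested floor functions permit, uniformly in $n$, and mopping up the sporadic small cases --- which is the content of \cite{BJ2014} and of the earlier partial work (Hanani, Mills, and others) on which it builds. Combining it with the upper bound above gives $D(n,4,3)$, and therefore $D((n),(4),3)$, as stated.
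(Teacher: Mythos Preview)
Your proposal is correct and matches the paper's approach exactly: the paper simply observes that a $3$-$((n),(4),1)$ generalized packing is nothing other than an ordinary $3$-$(n,4,1)$ packing and then quotes the determination of $D(n,4,3)$ by Bao and Ji \cite{BJ2014}. Your additional sketch of the Johnson-type upper bound and the construction side goes beyond what the paper records, but is accurate and in the same spirit.
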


\vspace{0.2cm} Case 2: $\vk=(3,1)$. \vspace{0.1cm}

 Let
$\vn=(n_1,n_2)$, then we have $D(\vn,\vk,3)\leq \min\{{n_1\choose
3},\lfloor n_2D(n_1,3,2)\rfloor\}$. We prove that the upper bound is
achievable by using disjoint partial triple systems. Two
$2$-$(N,k,1)$ packings $(X,\A)$ and $(X,\B)$ are  {\em disjoint} if
$\A\cap\B=\emptyset$.

 When
$n\equiv 1,3\pmod{6}$, $n\neq 7$, there exists a {\em  large set} of
Steiner triple systems of order $n$, i.e., a set of $n-2$ pairwise disjoint
optimal $2$-$(n,3,1)$ packings,
\cite{Lu:1983,Lu:1984,Teirlinck:1991}. By collecting from each
STS$(n)$ the blocks not containing a fixed point, we obtain a set of
$n-2$ disjoint optimal $2$-$(n,3,1)$ packings of order $n-1$. For
$n\equiv 4\pmod{6}$, it was proved by Etzion
\cite{Etzion:1992a,Etzion:1992b} that there exists a partition of
all triples into $n-1$ optimal $2$-$(n,3,1)$ packings and one
$2$-$(n,3,1)$ packing of size $(n-1)/3$. For $n\equiv 5\pmod{6}$,
$n\geq 11$, Ji \cite{Ji:2006} proved that there is a partition of
triples into $n-2$ optimal $2$-$(n,3,1)$ packings and one packing of
size $4(n-2)/3$. Let $M(n)$ denote the maximum number of disjoint
optimal $2$-$(n,3,1)$ packings. Then for $n\geq 8$, $M(n)=n-2$ when
$n$ is odd and $M(n)=n-1$ when $n$ is even.

\begin{proposition}
$D((n_1,n_2),(3,1),3)=\min\{{n_1\choose 3},\lfloor
n_2D(n_1,3,2)\rfloor\}$ for all $n_1$, $n_2>0$ except for
$(n_1,n_2)\in \{(6,5),(7,3),(7,4),(7,5)\}$, whose values are listed
below.
\[\begin{array}{|c|c|c|c|c|}
\hline (n_1,n_2) &(6,5)&(7,3)&(7,4)&(7,5)\\ \hline D((n_1,n_2),(3,1),3) &18&20&26&31\\
\hline
\end{array}\]

\end{proposition}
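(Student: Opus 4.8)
The plan is to match the upper bound $D((n_1,n_2),(3,1),3)\le\min\{\binom{n_1}{3},\lfloor n_2D(n_1,3,2)\rfloor\}$ recalled above with an explicit construction. First I would unpack what a $3$-$((n_1,n_2),(3,1),1)$ generalized packing is: a block is a pair $(B,\{c\})$ with $B\in\binom{X_1}{3}$ and $c\in X_2$, and the only $\vt$ admissible here are $(3,0)$ and $(2,1)$. Admissibility for $\vt=(3,0)$ says every $3$-subset of $X_1$ occurs in at most one block, i.e.\ the triples $B$ taken over all blocks are pairwise distinct; admissibility for $\vt=(2,1)$ says that for each fixed $c\in X_2$ the triples $B$ appearing in blocks $(B,\{c\})$ form a $2$-$(n_1,3,1)$ packing on $X_1$. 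Hence constructing an optimal generalized packing amounts exactly to choosing pairwise \emph{disjoint} $2$-$(n_1,3,1)$ packings $\A_c$ on $X_1$, one for each $c\in X_2$ and disjoint as sets of triples, whose sizes sum to $\min\{\binom{n_1}{3},\,n_2D(n_1,3,2)\}$.

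For $n_1\ge 8$ I would split on $n_2$, writing $M(n_1)$ for the maximum number of pairwise disjoint optimal $2$-$(n_1,3,1)$ packings ($M(n_1)=n_1-2$ for $n_1$ odd and $n_1-1$ for $n_1$ even, as recalled above). If $n_2\le M(n_1)$, take $n_2$ of the disjoint optimal packings furnished by the large-set and partition results cited above (Lu--Teirlinck for $n_1\equiv1,3\pmod{6}$; large sets of $\mathrm{STS}(n_1+1)$ punctured at a point for $n_1\equiv0,2\pmod{6}$; Etzion for $n_1\equiv4$; Ji for $n_1\equiv5$), giving total size $n_2D(n_1,3,2)$; a short computation with the classical formula for $D(n_1,3,2)$ shows $M(n_1)D(n_1,3,2)\le\binom{n_1}{3}$ in each residue class, so this total is the required minimum. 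For larger $n_2$, I would invoke the sharper content of those same theorems: each gives a partition of \emph{all} $\binom{n_1}{3}$ triples into $M(n_1)$ optimal packings together with at most one further, smaller, $2$-$(n_1,3,1)$ packing, hence a partition of all triples into $P(n_1)\le M(n_1)+1$ disjoint packings. A second short computation gives $P(n_1)D(n_1,3,2)\ge\binom{n_1}{3}$, so as soon as $n_2\ge P(n_1)$ this partition, with the unused elements of $X_2$ carrying no block, realizes $\binom{n_1}{3}$ blocks, which is then the minimum. Since $P(n_1)\le M(n_1)+1$, the two ranges $n_2\le M(n_1)$ and $n_2\ge P(n_1)$ exhaust all $n_2\ge1$, completing $n_1\ge 8$.

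It remains to handle $n_1\in\{3,4,5,6,7\}$ directly. For $n_1\le4$ there are at most four triples and the statement is immediate; for $n_1=5$ one uses a proper $5$-edge-colouring of $K_5$ with every colour class of size two, which, on passing to complements of triples, yields five disjoint optimal $2$-$(5,3,1)$ packings partitioning all $10$ triples. The orders $n_1=6$ and $n_1=7$ are precisely where the large-set machinery breaks down---there is no large set of $\mathrm{STS}(7)$, and in fact $M(6)=4$ and $M(7)=2$---which is exactly the source of the four listed exceptions $(6,5),(7,3),(7,4),(7,5)$: for these one exhibits optimal configurations by hand and checks optimality against the Johnson-type bound of Proposition~\ref{rb}, while for every remaining pair with $n_1\in\{6,7\}$ the formula is recovered from the available disjoint packings exactly as in the general case. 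The main obstacle is this last piece of bookkeeping in the large-$n_2$ regime: one has to confirm, residue class by residue class, that the leftover packing in the Etzion and Ji partitions is small enough to fit into a single colour class and that $\lceil\binom{n_1}{3}/D(n_1,3,2)\rceil$ never exceeds the number of packing classes available, which is what makes the case analysis unavoidable and what isolates $n_1=6,7$ as genuine exceptions.
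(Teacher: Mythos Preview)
Your overall strategy---reducing to families of pairwise disjoint $2$-$(n_1,3,1)$ packings indexed by $X_2$, and invoking the large-set/partition results of Lu--Teirlinck, Etzion, and Ji for $n_1\ge 8$---is exactly what the paper does, and your case split $n_2\le M(n_1)$ versus $n_2\ge P(n_1)$ is correct and cleanly argued.

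The gap is in your treatment of the four exceptions. You say you would ``check optimality against the Johnson-type bound of Proposition~\ref{rb},'' but that bound \emph{is} the quantity $\min\{\binom{n_1}{3},\,n_2 D(n_1,3,2)\}$: for $(7,3),(7,4),(7,5),(6,5)$ it gives $21,28,35,20$, not the claimed values $20,26,31,18$. So the Johnson bound cannot certify these as optimal---indeed, the whole point of listing them as exceptions is that the Johnson bound is \emph{not} met. For $(7,3)$ and $(7,4)$ one can instead argue via Cayley's result $M(7)=2$: if $(7,3)$ had $21$ blocks then all three colour classes would be STS$(7)$'s, and if $(7,4)$ had $\ge 27$ blocks then at least three colour classes would have size $7$ (else the total is $\le 7+7+6+6=26$), in either case giving three pairwise disjoint STS$(7)$'s, a contradiction. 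But for $(7,5)$ and $(6,5)$ no such short counting argument is available; the paper establishes the upper bounds $31$ and $18$ by exhaustive computer search, and you would need to do the same or supply a new argument.
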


\begin{proof} Let $\vX=(X_1,X_2)$ with $|X_i|=n_i$, $i=1,2$. Assume
that $(X_1,\B_i)$, $i\in I_{M(n_1)+\delta}$ are disjoint
$2$-$(n_1,3,1)$ packings as above with the first $M(n_1)$ being
optimal. Here $\delta=1$ when $n\equiv 4,5\pmod{6}$ and $0$
otherwise. For each $j\leq \min\{M(n_1)+\delta,n_2\}$, define
\[\C(j)=\{(B, \{x\}):B\in \B_i \text{ and } x \text{ is $i$-th
element of $X_2$}, 1\leq i \leq j\}.\]

For $n_1\geq 8$, if $n_2\leq M(n_1)$, then $(\vX,\C(n_2))$ is an
optimal generalized packing of size $\lfloor n_2D(n_1,3,2)\rfloor$;
if $n_2>M(n_1)$, then $(\vX,\C(M(n_1+\delta)))$ is optimal  with
${n_1\choose 3}$ blocks.

For $n_1\leq 4$, the optimal packings are trivial. For $n_1=5$, the
triples in $X_1$ can be partitioned into five disjoint optimal
$2$-$(5,3,1)$ packings, i.e., $M(5)=5$. For example, let
$\B_1=\{125,345\}$, $\B_2=\{135,234\}$, $\B_3=\{145,123\}$,
$\B_4=\{235,124\}$ and $\B_5=\{245,134\}$, then $(I_5, \B_i)$, $i\in
I_5$ are disjoint optimal $2$-$(5,3,1)$ packings. Then by the same
construction, we have $D(\vn,\vk,3)=2n_2$ if $n_2\leq 5$ and
$D(\vn,\vk,3)={5\choose 3}=10$ if $n_2>5$.

For $n_1=7$, Cayley \cite{Cayley:1850} showed that there are only
two mutually disjoint STS$(7)$s, i.e., $M(7)=2$. Now consider the
following six disjoint $2$-$(7,3,1)$ packings $(I_7,\B_i)$ with
$i\in I_6$:
\[\begin{array}{l}
\B_1=\{123,145,167,246,257,347,356\}, \\
\B_2=\{124,137,156,235,267,346,457\}, \\
\B_3=\{146,157,247,256,345,367\} ,\\
\B_4=\{125,136,147,234,357,456\}, \\
\B_5=\{126,135,237,245,567\},\\
\B_6=\{127,134,236,467\}.
\end{array}\]
So for $n_2\leq 2$, $(\vX,\C(n_2))$ is an optimal generalized
packing of size $\lfloor n_2D(n_1,3,2)\rfloor$; for $n_2=3$ or $4$,
$(\vX,\C(n_2))$ is an optimal generalized packing of size $\lfloor
n_2D(n_1,3,2)\rfloor-n_2+2$; for $n_2=5$, $(\vX,\C(5))$ is a
 generalized packing of size $31$, which is also optimal by an exhaustive computer search;
for all $n_2\geq 6$, $(\vX,\C(6))$ is an optimal generalized packing
of size ${n_1 \choose 3}$.

For $n_1=6$, delete all blocks containing the element $2$ from the
above six $2$-$(7,3,1)$ packings over $I_7$, then we get  six
disjoint $2$-$(6,3,1)$ packings over $I_7\setminus\{2\}$ with blocks
set $\B'_i$, where $|\B'_i|=4$ for $1\leq i\leq 4$ and $|\B'_i|=2$
for $i=5,6$. So for $n_2\leq 4$, $(\vX,\C(n_2))$ is an optimal
generalized packing of size $\lfloor n_2D(n_1,3,2)\rfloor$;  for
$n_2=5$, $(\vX,\C(5))$ is optimal of size $18$ by an exhaustive computer search;
for all $n_2\geq 6$, $(\vX,\C(6))$ is an optimal generalized packing
of size ${n_1 \choose 3}$.
\end{proof}

\vspace{0.2cm} Case 3: $\vk=(2,2)$.

\vspace{0.1cm} Let $\vn=(n_1,n_2)$ and assume $n_1\leq n_2$ without
loss of generality. Then the upper bound is
$D(\vn,\vk,3)\leq\frac{n_1(n_1-1)}{2}\left\lfloor\frac{n_2}{2}\right\rfloor$.

Denote $K_{n}$ a complete graph with $n$ vertices. Let
$\F(n)=\{F_1,F_2,\dots,F_{\gamma(n)}\}$ be a {\em $1$-factorization}
of $K_{n}$ if $n\equiv 0\pmod{2}$, or a {\em near $1$-factorization}
of $K_{n}$ if $n\equiv 1\pmod{2}$. Then $\gamma(n)=n-1$ if $n\equiv
0\pmod{2}$, and $\gamma(n)=n$  if $n\equiv 1\pmod{2}$. Further, each
{\em $1$-factor}  $F_i$ is of size
$\left\lfloor\frac{n}{2}\right\rfloor$, $i\in I_{\gamma(n)}$.

\begin{proposition}
$D((n_1,n_2),(2,2),3)=\frac{n_1(n_1-1)}{2}\left\lfloor\frac{n_2}{2}\right\rfloor$,
where $n_1\leq n_2$.
\end{proposition}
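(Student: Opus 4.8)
The upper bound $D((n_1,n_2),(2,2),3)\le\frac{n_1(n_1-1)}{2}\left\lfloor\frac{n_2}{2}\right\rfloor$ is already recorded above (and reappears in one line below), so the task is to construct a $3$-$((n_1,n_2),(2,2),1)$ generalized packing of exactly this size. The plan is to first restate the generalized packing condition purely in terms of matchings, and then assemble the blocks from (near-)$1$-factorizations of $K_{n_1}$ and $K_{n_2}$.

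First I would unwind the definitions. For $\vk=(2,2)$ and $t=3$, a $(\vn,\vk,3)$-admissible tuple $\vT=(T_1,T_2)$ has $(|T_1|,|T_2|)\in\{(2,1),(1,2)\}$, since neither component can have size $3$. Writing a prospective packing as a set $\CP$ of pairs $(e,f)$ with $e\in\binom{X_1}{2}$ and $f\in\binom{X_2}{2}$, the axiom applied to the $(2,1)$-tuples says precisely that for each fixed $e$ the pairs $f$ with $(e,f)\in\CP$ are pairwise disjoint, i.e.\ form a matching of $K_{n_2}$; the axiom applied to the $(1,2)$-tuples says that for each fixed $f$ the pairs $e$ with $(e,f)\in\CP$ form a matching of $K_{n_1}$. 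In particular $|\CP|=\sum_{e}|\{f:(e,f)\in\CP\}|\le\binom{n_1}{2}\lfloor n_2/2\rfloor$, which recovers the upper bound; to meet it, each of those matchings of $K_{n_2}$ must be a full $1$-factor (a near-$1$-factor when $n_2$ is odd).

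For the construction, let $\F(n_1)=\{G_1,\dots,G_{\gamma(n_1)}\}$ and $\F(n_2)=\{F_1,\dots,F_{\gamma(n_2)}\}$ be the (near-)$1$-factorizations of $K_{n_1}$ and $K_{n_2}$ introduced above; recall that these partition the edge sets of $K_{n_1}$ and $K_{n_2}$, respectively. A routine parity check, using $n_1\le n_2$ together with the fact that $n_1<n_2$ whenever $n_1$ and $n_2$ have opposite parities, gives $\gamma(n_1)\le\gamma(n_2)$ in all four parity cases. For each $e\in\binom{X_1}{2}$ let $\phi(e)\in\{1,\dots,\gamma(n_1)\}$ be the index with $e\in G_{\phi(e)}$, and define
\[
\CP=\bigl\{\,(e,f)\ :\ e\in\binom{X_1}{2},\ f\in F_{\phi(e)}\,\bigr\},
\]
which is legitimate since $\phi(e)\le\gamma(n_1)\le\gamma(n_2)$.

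It remains to verify the two matching conditions and count. For a fixed $e$, the set $\{f:(e,f)\in\CP\}$ equals the single $1$-factor $F_{\phi(e)}$, which has $\lfloor n_2/2\rfloor$ edges and is a matching of $K_{n_2}$ --- the $(2,1)$ condition. For a fixed $f\in\binom{X_2}{2}$, the pair $f$ lies in exactly one $F_j$, so $(e,f)\in\CP$ holds precisely when $\phi(e)=j$, i.e.\ when $e\in G_j$; hence $\{e:(e,f)\in\CP\}=G_j$, a matching of $K_{n_1}$ --- the $(1,2)$ condition. Finally each of the $\binom{n_1}{2}$ pairs $e$ contributes exactly $|F_{\phi(e)}|=\lfloor n_2/2\rfloor$ blocks, so $|\CP|=\frac{n_1(n_1-1)}{2}\lfloor n_2/2\rfloor$, matching the upper bound. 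The only real obstacle is the parity bookkeeping ensuring $\gamma(n_1)\le\gamma(n_2)$ --- which is exactly what makes enough distinct factors $F_j$ available to index the factors $G_j$ of $K_{n_1}$ --- while the rest follows at once from the translation of the packing axiom into the two matching conditions.
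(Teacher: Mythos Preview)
Your proof is correct and follows essentially the same construction as the paper: both take (near-)$1$-factorizations $\F(n_1)=\{G_i\}$ and $\F(n_2)=\{F_i\}$ and form all blocks $(e,f)$ with $e\in G_i$, $f\in F_i$ for $1\le i\le\gamma(n_1)$. Your version is more explicit---you spell out the translation of the $t=3$ packing axiom into the two matching conditions, verify $\gamma(n_1)\le\gamma(n_2)$ by parity cases, and check both conditions hold---whereas the paper simply states the block set without these verifications.
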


\begin{proof}
Let $\vX=(X_1,X_2)$ with $|X_i|=n_i$, $i=1,2$. Suppose that
$\F(n_i)$ is a (near) $1$-factorization of $K_{n_i}$ with vertex set
$X_i$, $i=1,2$. The upper bound is achieved by taking all blocks in
$\{(P,Q):P\in F_i\in\F(n_1),Q\in F_i'\in\F(n_2), 1\leq i\leq
\gamma(n_1)\}$.
\end{proof}

\vspace{0.2cm} Case 4: $\vk=(2,1,1)$. \vspace{0.1cm}

 Let
$\vn=(n_1,n_2,n_3)$ and assume $n_2\leq n_3$ without loss of
generality. Then the upper bound is $D(\vn,\vk,3)\leq
\min\{\frac{n_1(n_1-1)n_2}{2},\left\lfloor\frac{n_1}{2}\right\rfloor
n_2n_3\}$.

 An $n_2\times n_3$ {\em Latin rectangle} ($LR(n_2,n_3)$) is an $n_2\times n_3$ array over $I_{n_3}$ such
that each symbol occurs exactly once in each row, and at most once
in each column. When $n_2=n_3$, it is also called a {\em Latin
square} of order $n_3$. An $LR(n_2,n_3)$ can always be obtained from
a Latin square of order $n_3$ by collecting any $n_2$ rows.

\begin{proposition}
Let $n_2\leq n_3$. Then \[D((n_1,n_2,n_3),(2,1,1),3)=
\min\{\frac{n_1(n_1-1)n_2}{2},\left\lfloor\frac{n_1}{2}\right\rfloor
n_2n_3\}.\]
\end{proposition}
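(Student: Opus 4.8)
The plan is to construct a generalized packing $(\vX, \CP)$ with $\vX = (X_1, X_2, X_3)$, $|X_i| = n_i$, whose size meets the stated upper bound $\min\{\tfrac{n_1(n_1-1)n_2}{2}, \lfloor\tfrac{n_1}{2}\rfloor n_2 n_3\}$, and then invoke the already-stated Johnson-type upper bound to conclude equality. Observe first that in this regime ($\vk = (2,1,1)$, so $k=4$, $t=3$) a block is a triple $(P, \{y\}, \{z\})$ with $P \in \binom{X_1}{2}$, $y \in X_2$, $z \in X_3$, and the $3$-admissible configurations to avoid repeating are of three types: a pair from $X_1$ together with one point of $X_2$; a pair from $X_1$ together with one point of $X_3$; and one point each from $X_1, X_2, X_3$. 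Concretely, the packing condition says: (a) each $(\{a,b\}, y)$ with $\{a,b\}\subseteq X_1$, $y\in X_2$ lies in at most one block; (b) each $(\{a,b\}, z)$ with $\{a,b\}\subseteq X_1$, $z\in X_3$ lies in at most one block; (c) each $(a, y, z)$ lies in at most one block.

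The key combinatorial idea is to index blocks by the $1$-factors of $K_{n_1}$ paired against the cells of a Latin rectangle. Let $\F(n_1) = \{F_1, \ldots, F_{\gamma(n_1)}\}$ be a (near) $1$-factorization of $K_{n_1}$ on vertex set $X_1$, so each $F_i$ has $\lfloor n_1/2 \rfloor$ edges and $\gamma(n_1) = n_1-1$ or $n_1$ according to the parity of $n_1$; and let $L$ be an $LR(n_2, n_3)$ on symbol set indexed so that its rows are indexed by $X_2$, columns by $X_3$, and the entry $L(y,z)$ points to an index $i$ of a $1$-factor. The natural construction is
\[
\CP = \bigl\{ (P, \{y\}, \{z\}) : y \in X_2,\ z \in X_3,\ P \in F_{L(y,z)} \bigr\},
\]
so that each cell $(y,z)$ of the rectangle contributes exactly $\lfloor n_1/2 \rfloor$ blocks, giving $|\CP| = \lfloor n_1/2\rfloor\, n_2 n_3$. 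I would verify the three packing conditions: (c) holds because a fixed $(a,y,z)$ forces the unique cell $(y,z)$ and then the unique edge of $F_{L(y,z)}$ through $a$; (b) holds because a fixed $(\{a,b\}, z)$ forces, for each row $y$, at most the block from cell $(y,z)$, and since the edge $\{a,b\}$ lies in a unique $1$-factor $F_j$, at most one column-$z$ cell has value $j$ — here is where the Latin property (each symbol at most once per column) is essential; similarly (a) uses that each symbol appears at most once per row. When $\lfloor n_1/2\rfloor n_2 n_3 \le \tfrac{n_1(n_1-1)n_2}{2}$, i.e. when this term is the true minimum, this construction is already optimal.

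The main obstacle is the complementary range, where $\tfrac{n_1(n_1-1)n_2}{2} < \lfloor n_1/2\rfloor n_2 n_3$, i.e. $n_3$ is large relative to $n_1$; then we cannot afford to use full $1$-factors in every cell, since condition (b) would be violated — $\binom{n_1}{2}$ is now the binding constraint coming from pairs of $X_1$ against $X_3$. The fix is to cap the total edge-usage: we need, for each fixed $z \in X_3$, the multiset of edges $\bigcup_{y} F_{L(y,z)}$ to have each edge of $K_{n_1}$ appearing at most once, and symmetrically for each fixed $y \in X_2$; but when $n_3$ is large we must instead use only a partial sub-$1$-factorization in most cells so that the grand total is $\tfrac{n_1(n_1-1)}{2}$ edges per choice of $y$ (equivalently, over all of $X_1$'s pairs), distributed across the $n_3$ columns. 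In other words, for each row $y$ we want a decomposition of $K_{n_1}$ (all $\binom{n_1}{2}$ edges, used once) spread across the $n_3$ cells of that row, each cell holding a partial matching, with the further Latin-type restriction across columns; and the count becomes $n_2 \cdot \binom{n_1}{2}$. The clean way to organize both regimes uniformly is: take a proper edge-colouring of $K_{n_1}$ with $\gamma(n_1)$ colours (the $1$-factorization), then build an array $A$ indexed by $X_2 \times X_3$ whose entries are subsets of the colour set such that (i) each row's entries partition $I_{\gamma(n_1)}$ as evenly as the arithmetic allows, (ii) each column's entries are pairwise disjoint, and (iii) symmetrically each column's entries partition $I_{\gamma(n_1)}$ and each row's entries are pairwise disjoint — then place into cell $(y,z)$ the blocks $(P,\{y\},\{z\})$ for $P$ running over $\bigcup_{i \in A(y,z)} F_i$. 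I expect the construction of this array — essentially a resolvable-type splitting that realizes $\min\{n_2\binom{n_1}{2}, n_3\lfloor n_1/2\rfloor\} \cdot n_2$ edge-block incidences while respecting all disjointness constraints — to require a short case analysis on the parity of $n_1$ and on which of the two expressions is the minimum, together with an appeal to the existence of Latin rectangles (or an explicit cyclic construction modulo $n_1$) to guarantee the column-disjointness. Once the array exists, checking conditions (a)–(c) is routine and matching the stated bound is immediate.
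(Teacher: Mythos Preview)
Your first-regime construction (when $n_3 \le \gamma(n_1)$, equivalently $\lfloor n_1/2\rfloor n_2 n_3 \le \binom{n_1}{2} n_2$) is exactly the paper's: an $LR(n_2,n_3)$ whose entries select $1$-factors of $K_{n_1}$, yielding $\lfloor n_1/2\rfloor n_2 n_3$ blocks, with conditions (a)--(c) verified via the Latin row and column properties.

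For the complementary regime you are overcomplicating matters, and your proposal has both a missed simplification and an internal inconsistency. The paper's observation is that the \emph{same} construction already suffices: keep the $LR(n_2,n_3)$ over symbol set $I_{n_3}$, but populate only those cells $(y,z)$ with $m_{yz}\le \gamma(n_1)$, leaving the rest empty. In each row exactly $\gamma(n_1)$ cells are used, one for each factor, so that row contributes every edge of $K_{n_1}$ exactly once and (a) holds; in each column each symbol, hence each factor, appears at most once, so (b) holds; and (c) is as before. The block count is $\gamma(n_1)\cdot n_2\cdot\lfloor n_1/2\rfloor=\binom{n_1}{2}n_2$, on the nose. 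No subset-valued arrays, no parity cases, no deferred analysis.

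Your subset-array framework is general enough to encode this (take $A(y,z)=\{m_{yz}\}$ if $m_{yz}\le\gamma(n_1)$ and $A(y,z)=\emptyset$ otherwise), but as stated your conditions (i) and (iii) are incompatible: if every row's entries partition $I_{\gamma(n_1)}$ the total entry-mass is $n_2\gamma(n_1)$, while if every column's entries partition $I_{\gamma(n_1)}$ it is $n_3\gamma(n_1)$, forcing $n_2=n_3$. In this regime you want only (i)$+$(ii). A minor slip: the binding constraint here is (a) (pairs of $X_1$ against $X_2$, giving $\binom{n_1}{2}n_2$), not (b), which gives the weaker $\binom{n_1}{2}n_3$.
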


\begin{proof}
Let $\vX=(I_{n_1},I_{n_2},I_{n_3})$. Suppose that $M=(m_{ij})$ is an
$LR(n_2,n_3)$ over $I_{n_3}$ and $\F(n_1)$ is a (near)
$1$-factorization of $K_{n_1}$ with vertex set $I_{n_1}$. For each
$s\leq \min\{\gamma(n_1), n_3\}$, define
\[\C(s)=\{(P,\{i\},\{j\}):P\in
F_{x}\in\F(n_1) \text{ and }   i\in I_{n_2},j\in I_{n_3} \text{ with
} m_{ij}=x, 1\leq x\leq s \}.\] Then when $\gamma(n_1)\geq n_3$,
$(\vX,\C(n_3))$ is an optimal
generalized packing of size
$\left\lfloor\frac{n_1}{2}\right\rfloor n_2n_3$; when
$\gamma(n_1)<n_3$, $(\vX,\C(\gamma(n_1)))$ is an optimal
generalized packing of size $\frac{n_1(n_1-1)n_2}{2}$.
\end{proof}

\vspace{0.2cm} Case 5: $\vk=(1,1,1,1)$. \vspace{0.1cm}

\begin{proposition}
Let $\vn=(n_1,n_2,n_3,n_4)$ and $n_1\leq n_2\leq n_3\leq n_4$. Then
$D(\vn,(1,1,1,1),3)=n_1n_2n_3$.
\end{proposition}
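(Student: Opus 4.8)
The plan is to read the $3$-$(\vn,(1,1,1,1),1)$ generalized packing off as a mixed-alphabet code of length $4$. Every block has the form $(\{a_1\},\{a_2\},\{a_3\},\{a_4\})$ with $a_i\in X_i$, so $\CP$ may be identified with a subset of the Cartesian product $X_1\times X_2\times X_3\times X_4$. A $(\vn,(1,1,1,1),3)$-admissible $\vT$ fixes a single point in each of three of the four coordinates, and it is contained in a block $B$ exactly when $B$ agrees with $\vT$ on those three coordinates. Hence the condition $\lambda=1$ says precisely that no two distinct blocks of $\CP$ agree in three coordinates; equivalently, any two distinct blocks differ in at least two coordinates. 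So $D(\vn,(1,1,1,1),3)$ is the largest size of a length-$4$ code over alphabets of sizes $n_1,n_2,n_3,n_4$ with minimum Hamming distance at least $2$.

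For the upper bound I would use the projection $\pi\colon\CP\to X_1\times X_2\times X_3$ that deletes the last coordinate. If two blocks had the same image under $\pi$ they would agree in the first three coordinates, hence differ in at most one coordinate, contradicting the distance-$\geq 2$ property unless they coincide; so $\pi$ is injective and $|\CP|\leq n_1n_2n_3$. Since $n_1\leq n_2\leq n_3\leq n_4$, this is the strongest of the four bounds one gets this way, and it is exactly the Johnson-type bound referred to after Corollary~\ref{packingbound}.

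For the matching construction I would identify $X_i$ with $\{0,1,\ldots,n_i-1\}$ for $i=1,2,3$ and $X_4$ with $\bz_{n_4}$, and take
\[\CP=\Bigl\{\bigl(\{a\},\{b\},\{c\},\{(a+b+c)\bmod n_4\}\bigr):a\in X_1,\ b\in X_2,\ c\in X_3\Bigr\}.\]
This set has exactly $n_1n_2n_3$ blocks, since distinct triples $(a,b,c)$ already produce blocks differing in one of the first three coordinates. To check the distance condition, take the blocks coming from $(a,b,c)\neq(a',b',c')$: if they differ in at least two of the first three coordinates we are done, so suppose they differ in exactly one, say $a\neq a'$ with $b=b'$ and $c=c'$. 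Then their last coordinates are $(a+b+c)\bmod n_4$ and $(a'+b+c)\bmod n_4$, which are distinct because $0<|a-a'|<n_1\leq n_4$ forces $a\not\equiv a'\pmod{n_4}$; so the two blocks differ in the fourth coordinate too. The symmetric arguments for the cases $b\neq b'$ (using $n_2\leq n_4$) and $c\neq c'$ (using $n_3\leq n_4$) finish the verification, so $\CP$ is a $3$-$(\vn,(1,1,1,1),1)$ generalized packing of size $n_1n_2n_3$, matching the bound.

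I do not expect a genuine obstacle here; the only point that needs care is that the injectivity of $a\mapsto(a+b+c)\bmod n_4$ on $X_i$ uses exactly that $n_4$ is the largest of the four parameters, which is what makes the answer $n_1n_2n_3$ rather than another product of three of the $n_i$. If one wishes to avoid modular arithmetic, the same argument goes through with $(a+b+c)\bmod n_4$ replaced by any map $X_1\times X_2\times X_3\to X_4$ that is injective in each variable when the other two are fixed, and such a map exists whenever $n_4\geq\max\{n_1,n_2,n_3\}$.
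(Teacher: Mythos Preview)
Your proof is correct and follows essentially the same approach as the paper: the upper bound is the obvious projection onto three coordinates, and the construction is the parity-check code over $\bz_{n_4}$, i.e., all $4$-tuples with a fixed sum modulo $n_4$ (the paper uses $\sum s_i\equiv 0$, you use $s_4\equiv s_1+s_2+s_3$, which is the same up to a sign). Your write-up simply fills in the verification details that the paper leaves implicit.
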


\begin{proof}
It is clear that $D(\vn,\vk,3)\leq n_1n_2n_3$.
Consider $\bz_{n_4}$ and let $X_i=\{s_i: 0\le s_i\le n_i-1\}$ for $i\in I_4$.
Equality then follows from considering all blocks of the form
$(\{s_1\},\{s_2\},\{s_3\},\{s_4\})$, where $s_i\in X_i$ and $\sum_{i\in I_4}s_i=0$.
Here the addition is over $\bz_{n_4}$.
 \end{proof}

Combining results in \cite{BB2011DM}, we have determined
$T(\vn,d,\vw)$ for total weight less than or equal to four
completely except when $\vw=(2,2)$, $d=6$ and $n_1\leq n_2\leq
2n_1-1$, both $n_1$ and $n_2$ are odd.

\section{Optimal MCWCs with Minimum Distance Four}\label{d4}
In this section we handle the multiply constant-weight codes
MCWC$(m, n,d,w)$ with minimum distance four and small weight $w$.
For convenience, we sometimes arrange a codeword as an $m\times n$
binary matrix.


When $w = 1$, an MCWC$(m,n,d,1)$ is equivalent to an $n$-ary code of
length $m$ and distance $\delta$ by a bijection from row words to
the set $\{0,1,\ldots,n-1\}$.

\begin{lemma}[Chee {\em et al.} \cite{Cheeetal:MCWC}] \label{w=1}
 $M(m,n,d,1)=A_n(m,\delta)$, where $A_n(m,\delta)$ is the maximum size of an $n$-ary code of length $m$ and distance $\delta$.
\end{lemma}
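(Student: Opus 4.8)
The plan is to exhibit an explicit size-preserving bijection between the codewords of an MCWC$(m,n,d,1)$ and the words of an $n$-ary code of length $m$, and to check that under this bijection the binary distance constraint becomes exactly the $n$-ary distance constraint. First I would set up the correspondence: write $\vX=(X_1,\ldots,X_m)$ with each $|X_i|=n$ and identify each $X_i$ with $\{0,1,\ldots,n-1\}$. Since $\vw=(1,\ldots,1)$, every codeword $\su$ of an MCWC$(m,n,d,1)$ satisfies ${\rm supp}(\su)_{\vX}=(\{x_1\},\ldots,\{x_m\})$ for some $x_i\in\{0,1,\ldots,n-1\}$; send $\su$ to the $n$-ary word $\phi(\su)=(x_1,\ldots,x_m)$. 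The map $\phi$ is a bijection from the set of multiply constant-weight vectors in $R^X$ onto $\{0,1,\ldots,n-1\}^m$.

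Next I would compute distances. For a fixed part $i$, the coordinates in $X_i$ contribute $0$ to $d_H(\su,\sv)=\|\su-\sv\|$ when $x_i=y_i$ (the two singletons coincide) and contribute exactly $2$ when $x_i\neq y_i$ (two distinct singletons differ in precisely two coordinates). Summing over $i\in I_m$ gives $d_H(\su,\sv)=2\,|\{i\in I_m:x_i\neq y_i\}|=2\,d_H(\phi(\su),\phi(\sv))$, where the right-hand $d_H$ is the $n$-ary Hamming distance. Since $d=2\delta$, the requirement $d_H(\su,\sv)\geq d$ for all distinct codewords is equivalent to $d_H(\phi(\su),\phi(\sv))\geq\delta$ for all distinct words in the image. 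Hence $\C\mapsto\phi(\C)$ is a size-preserving bijection between the MCWC$(m,n,d,1)$s and the $n$-ary codes of length $m$ with minimum distance $\delta$, which yields $M(m,n,d,1)=A_n(m,\delta)$.

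There is no substantial obstacle in this argument; the single point that deserves an explicit line is that a weight-one word in a part is a single nonzero coordinate, so two such words restricted to one part are at binary distance either $0$ or $2$ and never $1$. This is exactly what forces the factor of $2$ relating the two distances and is consistent with the standing assumption that $d$ is even.
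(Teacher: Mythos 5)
Your proof is correct and follows exactly the route the paper indicates: the paper states (citing Chee \emph{et al.}) that an MCWC$(m,n,d,1)$ is equivalent to an $n$-ary code of length $m$ and distance $\delta$ via the bijection from weight-one row words to symbols in $\{0,1,\ldots,n-1\}$, and your argument simply writes out this bijection and the factor-of-two distance computation in full detail. Nothing is missing.
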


When $\delta=2$, $A_n(m,2)=n^{m-1}$ which can be achieved by a code
over $\bz_n$ consisting of all $m$-tuples with a constant sum.

\begin{corollary}\label{w=1d=4}
 $M(m,n,4,1)= n^{m-1}$.
\end{corollary}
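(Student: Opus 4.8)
The plan is to invoke Lemma~\ref{w=1} to reduce the claim entirely to a statement about $q$-ary codes, and then to recall the elementary fact that $A_n(m,2)=n^{m-1}$. Concretely, setting $d=4$ gives $\delta=d/2=2$, so Lemma~\ref{w=1} yields $M(m,n,4,1)=A_n(m,2)$, and it remains only to argue that $A_n(m,2)=n^{m-1}$. This is the single substantive point, and it is a classical singleton-type bound together with a matching construction.

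For the upper bound $A_n(m,2)\le n^{m-1}$, I would use a projection argument: given an $n$-ary code $\C$ of length $m$ with minimum distance at least $2$, the map that deletes the last coordinate is injective on $\C$ (two codewords agreeing in the first $m-1$ coordinates would be at distance at most $1$), so $|\C|\le n^{m-1}$. For the lower bound, I would exhibit the parity-type code $\C=\{(x_1,\ldots,x_m)\in\bz_n^m : x_1+\cdots+x_m\equiv 0\pmod n\}$, which the excerpt already points to in the sentence preceding Corollary~\ref{w=1d=4}; it has exactly $n^{m-1}$ codewords (the first $m-1$ coordinates are free, the last is determined), and any two distinct codewords differ in at least two coordinates, since changing a single coordinate changes the sum. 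Combining the two bounds gives $A_n(m,2)=n^{m-1}$, hence $M(m,n,4,1)=n^{m-1}$.

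There is essentially no obstacle here: the corollary is a direct specialization of the preceding lemma plus a one-line classical computation, and indeed the excerpt inserts the needed fact about $A_n(m,2)$ immediately before the statement. If one wanted to be maximally terse, the entire proof is: ``By Lemma~\ref{w=1}, $M(m,n,4,1)=A_n(m,2)=n^{m-1}$.'' The only care required is the bookkeeping that $d=4$ corresponds to $\delta=2$ under the convention $d=2\delta$ fixed earlier in Section~\ref{pre}.
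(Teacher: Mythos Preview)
Your proposal is correct and follows exactly the paper's approach: the paper states just before Corollary~\ref{w=1d=4} that $A_n(m,2)=n^{m-1}$ via the constant-sum code, and the corollary is then immediate from Lemma~\ref{w=1} with $\delta=2$. Your added projection argument for the upper bound $A_n(m,2)\le n^{m-1}$ is not spelled out in the paper but is the standard justification and is entirely appropriate.
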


%
%

Applying Proposition~\ref{rb}(i) iteratively  $w$ times for each
$i\in I_{m-1}$, we obtain the following consequence.
\begin{proposition}\label{sw}
If $d\leq 2w$, then $M(m,n,d,w)\leq \binom{n}{w}^{m-1}A(n,d,w)$.
\end{proposition}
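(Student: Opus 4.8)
The plan is to apply the first recursive bound in Proposition~\ref{rb}(i) repeatedly, peeling off one unit of weight at a time from each of the first $m-1$ coordinates until those coordinates are exhausted. Recall that for the equal-parameter code we have $M(m,n,d,w) = T(\vn,d,\vw)$ with $\vn = (n,\dots,n)$ and $\vw = (w,\dots,w)$. Proposition~\ref{rb}(i) says that for any coordinate $i$, $T(\vn,d,\vw) \le \lfloor \tfrac{n_i}{w_i} T(\vn',d,\vw')\rfloor$ where $\vn',\vw'$ are obtained from $\vn,\vw$ by decrementing the $i$-th entry of $\vw$ (and correspondingly the $i$-th entry of $n$). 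The condition $d \le 2w$ guarantees that throughout the process we never hit the trivial regime of Lemma~\ref{spread} prematurely, so the recursive bound stays meaningful; in fact we only need it to be a valid inequality, which it is.

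First I would fix coordinate $i=1$ and apply Proposition~\ref{rb}(i) a total of $w$ times, each time decrementing $w_1$ by one. This gives
\[
T((n,n,\dots,n),d,(w,w,\dots,w)) \le \binom{n}{w} \, T((n-w,n,\dots,n),d,(0,w,\dots,w)),
\]
where I have dropped the floor functions (an inequality is preserved) and collected the product $\tfrac{n}{w}\cdot\tfrac{n-1}{w-1}\cdots\tfrac{n-w+1}{1} = \binom{n}{w}$. Since a coordinate with weight $0$ contributes nothing — any codeword is forced to be zero on that block, so it may simply be deleted — we have $T((n-w,n,\dots,n),d,(0,w,\dots,w)) = T((n,\dots,n),d,(w,\dots,w))$ with $m-1$ parts, i.e.\ this equals $M(m-1,n,d,w)$. (Strictly, I would note that reducing $n_1$ below $w_1=0$ is harmless and that the first component can be discarded once its weight is $0$.)

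Iterating this over $i=1,2,\dots,m-1$ — each pass removing one part at the cost of a factor $\binom{n}{w}$ — yields
\[
M(m,n,d,w) \le \binom{n}{w}^{m-1} M(1,n,d,w) = \binom{n}{w}^{m-1} A(n,d,w),
\]
using $M(1,n,d,w) = A(n,d,w)$ from the definitions in Section~\ref{pre}. The only point requiring a little care — and the step I would write out most carefully — is the bookkeeping that the telescoping product of the ratios $n_i/w_i$ over the $w$ applications in a single coordinate is exactly $\binom{n}{w}$, and that discarding a weight-zero part does not change the code size; neither is deep, but both should be stated explicitly so the reader sees why the floors can be dropped and why the induction closes. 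This is really the whole argument, so there is no serious obstacle; the content is entirely in Proposition~\ref{rb}(i).
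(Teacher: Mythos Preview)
Your proposal is correct and mirrors the paper's own argument exactly: the paper's proof is the single sentence ``Applying Proposition~\ref{rb}(i) iteratively $w$ times for each $i\in I_{m-1}$, we obtain the following consequence,'' which is precisely your peel-off-one-coordinate-at-a-time scheme collecting a factor of $\binom{n}{w}$ per part. Your remarks about dropping floors and discarding weight-zero parts fill in the bookkeeping the paper leaves implicit; the hypothesis $d\le 2w$ is not actually needed for the inequality to hold (Proposition~\ref{rb}(i) applies regardless), it just ensures the resulting bound is nontrivial.
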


By Corollary~\ref{w=1d=4}, the bound in Proposition~\ref{sw} is
tight when $w=1$ and $d=4$. Next, we show that it is also tight for
some other small values $w$. The following two corollaries are immediate
consequences of Proposition~\ref{sw}.

\begin{corollary}\label{sw=2}
 $M(m,n,4,2)\leq \binom{n}{2}^{m-1}\lfloor \frac{n}{2}\rfloor$.
\end{corollary}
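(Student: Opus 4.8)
The plan is to derive $M(m,n,4,2)\le\binom{n}{2}^{m-1}\lfloor n/2\rfloor$ directly from Proposition~\ref{sw} by plugging in $w=2$ and $d=4$ together with the classical value of $A(n,4,2)$. Proposition~\ref{sw} applies since $d=4\le 2w=4$, so it gives $M(m,n,4,2)\le\binom{n}{2}^{m-1}A(n,4,2)$. Thus the only remaining point is to evaluate the constant-weight code bound $A(n,4,2)$.

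First I would recall that a binary constant-weight code of weight $2$ and minimum distance $4$ is exactly a set of $2$-subsets of an $n$-set that are pairwise disjoint, i.e.\ a partial matching in $K_n$; indeed two distinct weight-$2$ words at Hamming distance $<4$ must share a coordinate, and distance exactly $4$ is equivalent to disjoint supports. Hence $A(n,4,2)=\lfloor n/2\rfloor$, the size of a maximum matching in $K_n$. (Alternatively one can cite the standard fact that $A(n,2\delta,w)$ counts packings and invoke the Johnson bound $A(n,4,2)\le\lfloor n/2\lfloor(n-1)/1\rfloor\rfloor/\!\cdots$, which collapses to $\lfloor n/2\rfloor$; or simply note this is a $1$-$(n,2,1)$ packing.) Substituting this into the displayed inequality yields the claim.

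There is essentially no obstacle here: the statement is labelled a corollary precisely because it is an immediate specialization of Proposition~\ref{sw}. The only mild subtlety — and the one step worth stating carefully — is the identification $A(n,4,2)=\lfloor n/2\rfloor$, which is where the combinatorial content (disjoint pairs form a matching) enters; everything else is arithmetic substitution. So the proof is a single line: apply Proposition~\ref{sw} with $w=2$, $d=4$ and use $A(n,4,2)=\lfloor n/2\rfloor$.

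For completeness I would also note in passing that this bound motivates the product construction of Section~\ref{d4}, since one expects equality to hold whenever the appropriate large sets of optimal matchings (i.e.\ $1$-factorizations) exist — but that is the content of later results, not of this corollary, whose proof requires nothing beyond Proposition~\ref{sw}.
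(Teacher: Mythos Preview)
Your proposal is correct and matches the paper's approach exactly: the paper states that this corollary is an immediate consequence of Proposition~\ref{sw}, and your derivation---specializing to $w=2$, $d=4$ and using $A(n,4,2)=\lfloor n/2\rfloor$---is precisely that specialization spelled out.
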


\begin{corollary}\label{sw=3}
 $M(m,n,4,3)\leq \binom{n}{3}^{m-1} \lfloor \frac{n}{3} \lfloor\frac{n-1}{2}
 \rfloor\rfloor$.
\end{corollary}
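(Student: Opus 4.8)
The plan is to invoke Proposition~\ref{sw} directly and then supply the standard Johnson-type bound on $A(n,4,3)$. Since $d=4\le 6=2w$ when $w=3$, Proposition~\ref{sw} gives immediately
\[
M(m,n,4,3)\le \binom{n}{3}^{m-1}A(n,4,3),
\]
so the whole task reduces to showing $A(n,4,3)\le \lfloor \tfrac{n}{3}\lfloor\tfrac{n-1}{2}\rfloor\rfloor$.

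For this I would apply the classical Johnson bound $A(n,d,w)\le \lfloor \tfrac{n}{w}A(n-1,d,w-1)\rfloor$ twice (this is the $m=1$ instance of Proposition~\ref{rb}(i), or equivalently Proposition~\ref{rb}(ii)). First, $A(n,4,3)\le \lfloor \tfrac{n}{3}A(n-1,4,2)\rfloor$; then $A(n-1,4,2)\le \lfloor \tfrac{n-1}{2}A(n-2,4,1)\rfloor$. Finally $A(n-2,4,1)=1$, since any two distinct binary vectors of weight one are at Hamming distance exactly $2<4$, so a constant-weight-$1$ code of distance $4$ has at most one codeword. Substituting back yields $A(n,4,3)\le \lfloor \tfrac{n}{3}\lfloor\tfrac{n-1}{2}\rfloor\rfloor$, and combining with the displayed inequality completes the proof.

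There is essentially no obstacle here: the only minor point to be careful about is that the nested floors must be applied in the correct order (innermost first, as written), matching the order in which the recursive Johnson bound is iterated; the value $A(n-2,4,1)=1$ is the base case and is trivial. The statement is therefore, as claimed, an immediate consequence of Proposition~\ref{sw} together with the well-known bound on $A(n,4,3)$.
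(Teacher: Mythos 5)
Your proposal is correct and follows essentially the same route as the paper: the paper states the corollary as an immediate consequence of Proposition~\ref{sw}, with the factor $\lfloor \tfrac{n}{3}\lfloor\tfrac{n-1}{2}\rfloor\rfloor$ being the standard Johnson bound on $A(n,4,3)$. Your explicit derivation of that bound by iterating $A(n,d,w)\le\lfloor\tfrac{n}{w}A(n-1,d,w-1)\rfloor$ down to the trivial base case $A(n-2,4,1)=1$ just fills in the well-known step the paper leaves implicit.
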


\subsection{Product Construction}
We give a product construction as follows by generalizing the method
used in \cite[Section 6]{Etzion:2008}.

 Let $Y$ be a finite set of size $n$. Suppose that
$\C_i\subseteq R^Y$ is a CWC$(n,d_1,w)$ for each $i\in I_s$ and
$\cup_{i\in I_s} \C_i$ is a CWC$(n,d_2,w)$. Obviously, we have
$d_2\leq d_1$. Further, let $\C_0\subset R^{I_m\times I_s}$ be an
MCWC$(m,s,d_3,1)$. Let $X=I_m\times Y$ and $\vX=(\{1\}\times
Y,\ldots,\{m\}\times Y)$. For each codeword $\su\in \C_0$, define a
code $\D_{\su}\subset R^{X}$ as
\[\D_{\su}=\{ \sv^{j_1}| \sv^{j_2}|\cdots| \sv^{j_m}: \sv^{j_i}\in\C_{j_i},  (i,j_i)\in {\rm
supp}(\su) \text{ and } i\in I_m\},\] where the $|$ operator
concatenates codewords as strings.

Then for each $c\in \D_{\su}$, ${\rm supp}(c)_{\vX}\in
\binom{\vX}{\vw}$, where $\vw=(w,\ldots,w)$. Let $\D=\cup_{\su\in
\C_0}\D_{\su}$. It is easy to check that $\D$ is an MCWC$(m,n,d,w)$
with minimum distance $d\geq \min(d_3d_2/2,d_1)$. In fact, for any
two codewords in $\D_{\su}$, the distance is at least $d_1$. For two
codewords $c\in \D_{\su}$ and $c'\in \D_{\sv}$ with $\su\neq \sv$,
since $d_H(\su,\sv)\geq d_3$, $|{\rm supp}(\su)\setminus {\rm
supp}(\sv)|\geq d_3/2$. Hence $d_H(c,c')\geq d_3d_2/2$. Further, if
all $\C_i$ have the same size $M$, then  $|\D|=|\C_0|M^m$.

\subsection{Optimal MCWCs}

Now we apply the product construction with special families of
constant-weight codes to obtain optimal multiply constant-weight
codes.

\begin{proposition}\label{sw=2d=4}
 $M(m,n,4,2)= \binom{n}{2}^{m-1}\lfloor \frac{n}{2}\rfloor$ for all
 positive integers
 $n$ and $m$.
\end{proposition}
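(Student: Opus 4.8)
The plan is to prove the proposition by exhibiting a matching lower bound via the product construction, since Corollary~\ref{sw=2} already gives the upper bound $M(m,n,4,2)\leq \binom{n}{2}^{m-1}\lfloor n/2\rfloor$. So it suffices to construct an MCWC$(m,n,4,2)$ of size exactly $\binom{n}{2}^{m-1}\lfloor n/2\rfloor$. The product construction needs three ingredients: a family $\C_1,\ldots,\C_s\subseteq R^Y$ of CWC$(n,d_1,2)$'s whose union is a CWC$(n,d_2,2)$, and an MCWC$(m,s,d_3,1)$, arranged so that $\min(d_3d_2/2, d_1)\geq 4$ and the resulting size $|\C_0|\cdot M^m$ hits the target.

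First I would choose the constant-weight ingredient. A CWC$(n,4,2)$ is exactly a partial matching on $n$ points with distance $4$, i.e.\ a set of pairwise disjoint edges — equivalently a single $1$-factor (or near-$1$-factor) of $K_n$, which has size $\lfloor n/2\rfloor$. Take $\F(n)=\{F_1,\ldots,F_{\gamma(n)}\}$ to be a (near) $1$-factorization of $K_n$ with $\gamma(n)=n-1$ when $n$ is even and $\gamma(n)=n$ when $n$ is odd; set $\C_i$ to be the code corresponding to $F_i$. Then each $\C_i$ is a CWC$(n,4,2)$ of size $M=\lfloor n/2\rfloor$, and $\cup_i\C_i$ consists of \emph{all} $\binom{n}{2}$ weight-$2$ words, so it is a CWC$(n,2,2)$; thus $d_1=4$ and $d_2=2$. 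For the outer code I would take $s=\gamma(n)$ and $\C_0$ to be an optimal MCWC$(m,s,4,1)$, which by Corollary~\ref{w=1d=4} has size $s^{m-1}$ and $d_3=4$. Then $\min(d_3d_2/2,d_1)=\min(4\cdot2/2,4)=4$, so $\D=\cup_{\su\in\C_0}\D_\su$ is an MCWC$(m,n,4,2)$, and its size is $|\C_0|\cdot M^m = s^{m-1}\lfloor n/2\rfloor^m$.

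The size computed above, $s^{m-1}\lfloor n/2\rfloor^m$, is unfortunately smaller than the target $\binom{n}{2}^{m-1}\lfloor n/2\rfloor$, so the main obstacle is closing this gap: a naive application of the product construction is wasteful because each $\D_\su$ carries a full $M^m$ factor while the outer code only records which $1$-factor is used in each row. The fix is to be more careful about how the outer structure is organized — essentially, in all but one of the $m$ coordinates we want to range over \emph{all} of $\binom{n}{2}$ independently (they contribute the $\binom{n}{2}^{m-1}$ factor freely, since differences in those coordinates alone already give distance $\geq 2$ in two parts, hence $\geq 4$ overall), and only the last coordinate needs the $1$-factorization machinery to be coordinated with the others so that two codewords agreeing in the first $m-1$ coordinates still differ by $4$. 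Concretely I would index codewords of the target MCWC by $(\{a_1,b_1\},\ldots,\{a_{m-1},b_{m-1}\}, e_m)$ where each $\{a_i,b_i\}\in\binom{Y}{2}$ is arbitrary and $e_m$ is an edge of the $1$-factor $F_{\chi(\{a_1,b_1\},\ldots,\{a_{m-1},b_{m-1}\})}$, with $\chi$ a suitable surjection from $\binom{Y}{2}^{m-1}$ onto $I_{\gamma(n)}$ arranged so that whenever two $(m-1)$-tuples of pairs agree in all but... actually the cleanest route is: pick any fixed bijection-like labeling and use that two codewords differing only in coordinate $m$ lie in a common $1$-factor hence differ by $4$ there, while two codewords differing in some coordinate $i\le m-1$ already differ by $\geq 2$ in part $i$; if they also differ in part $m$ we are done, and if they agree in part $m$ we need them to differ by $\geq 2$ in some \emph{other} part, which forces the combinatorial condition on $\chi$. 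Verifying that such a $\chi$ exists (it amounts to a mild covering/coloring condition and is easy for $\gamma(n)\le\binom{n}{2}$, which always holds) and checking all the distance cases is the crux; the size is then exactly $\binom{n}{2}^{m-1}\cdot\lfloor n/2\rfloor$, matching Corollary~\ref{sw=2}, and the proof concludes. I would also dispose of the trivial small cases ($n\leq 2$, or $m=1$ where the statement reduces to $A(n,4,2)=\lfloor n/2\rfloor$) separately at the start.
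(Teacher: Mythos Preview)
Your first paragraph is exactly the paper's proof, and it already works; the ``gap'' you identify is an arithmetic slip. You computed the size of the product construction as $s^{m-1}\lfloor n/2\rfloor^m$ with $s=\gamma(n)$, and then claimed this is smaller than $\binom{n}{2}^{m-1}\lfloor n/2\rfloor$. But observe that
\[
s^{m-1}\left\lfloor \tfrac{n}{2}\right\rfloor^m
= \left(s\cdot\left\lfloor \tfrac{n}{2}\right\rfloor\right)^{m-1}\left\lfloor \tfrac{n}{2}\right\rfloor,
\]
and $s\cdot\lfloor n/2\rfloor = \gamma(n)\cdot\lfloor n/2\rfloor$ is precisely the number of (near) $1$-factors times the size of each, i.e.\ the total number of edges of $K_n$, which is $\binom{n}{2}$. (Explicitly: $(n-1)\cdot n/2$ when $n$ is even, $n\cdot(n-1)/2$ when $n$ is odd.) So the product construction already yields size $\binom{n}{2}^{m-1}\lfloor n/2\rfloor$, matching the upper bound from Corollary~\ref{sw=2}, and you are done after the first paragraph.

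Your second paragraph --- the attempt to patch the perceived shortfall via a coloring map $\chi$ --- is therefore unnecessary, and as written it is also incomplete: the condition you would need on $\chi$ (that two $(m-1)$-tuples agreeing in all but one coordinate and mapping to the same color must in fact differ in at least two coordinates) is not a ``mild'' condition and is in general false for a generic surjection. Drop that paragraph entirely; the paper's proof is just your first approach plus the one-line arithmetic check above.
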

\begin{proof} When $n$ is even, let $s=n-1$.  For each $i\in I_s$, let $\C_i\subset R^Y$
be a CWC$(n,4,2)$ such that $F_i=\{{\rm supp}(\su):\su\in \C_i\}$ is
a $1$-factor and $\{F_i:i\in I_s\}$ is a $1$-factorization of $K_n$
with vertex set $Y$. Then apply the product construction  with
$d_1=d_3=4$, $d_2=2$ and $w=2$.

When $n$ is odd, let $s=n$. Then use the similar codes such that
each $F_i$ is a near $1$-factor and $\{F_i:i\in I_s\}$ is a near
$1$-factorization.
\end{proof}

To get the following result, we again use the existence of
partitions of triples into disjoint optimal packings as described in
Section \ref{sweight}.

\begin{proposition}\label{sw=3d=4}
 $M(m,n,4,3)=  \binom{n}{3}^{m-1}   \lfloor \frac{n}{3} \lfloor\frac{n-1}{2}
 \rfloor\rfloor$ for all positive integers $m$ and $n\equiv 0,1,2,3\pmod 6$ with $n\neq 6,7$.
\end{proposition}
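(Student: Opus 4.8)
The plan is to apply the product construction from the previous subsection with a carefully chosen family of constant-weight codes, namely a partition of all triples of an $n$-set into disjoint optimal $2$-$(n,3,1)$ packings together with possibly one leftover packing, exactly as catalogued in Section~\ref{sweight}. Concretely, set $Y$ to be an $n$-set, let $d_1=d_3=4$, $d_2=2$, $w=3$, and take $\C_i\subset R^Y$ (for $i$ ranging over the parts of the partition) to be the constant-weight codes whose supports are the blocks of the $i$-th packing. Each $\C_i$ is a CWC$(n,4,3)$ because distinct blocks of a $2$-$(n,3,1)$ packing share at most one point, hence the symmetric difference of two codewords has weight at least $4$; and the union $\bigcup_i\C_i$ is a CWC$(n,2,3)$ trivially. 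For $\C_0$ I would take an optimal MCWC$(m,s,4,1)$, which by Corollary~\ref{w=1d=4} has size $s^{m-1}$, where $s$ is the number of packings used. The product construction then yields an MCWC$(m,n,4,3)$ of size $s^{m-1}M^m$ provided all the $\C_i$ have the same size $M$; since the packings are not all the same size (there is typically one short leftover packing), I would instead bound $|\D|$ directly, or arrange $\C_0$ to weight the parts appropriately.

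The key numerical point is that when $n\equiv 1,3\pmod 6$ (with $n\ne 7$) there is a large set of $n-2$ pairwise disjoint optimal STS$(n)$; deleting blocks through a fixed point gives $n-2$ disjoint optimal $2$-$(n,3,1)$ packings of order $n$ (wait — one must be careful: the large-set packings already have order $n$, each of size $D(n,3,2)=\frac{n(n-1)}{6}$ when $n\equiv 1,3\pmod 6$). For $n\equiv 4\pmod 6$ there are $n-1$ optimal packings plus one of size $(n-1)/3$; for $n\equiv 5\pmod 6$ (with $n\ge 11$), there are $n-2$ optimal packings plus one of size $4(n-2)/3$. The arithmetic to check is that the resulting code size equals the upper bound $\binom{n}{3}^{m-1}\lfloor\frac{n}{3}\lfloor\frac{n-1}{2}\rfloor\rfloor$ from Corollary~\ref{sw=3}. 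I would verify this by noting that $\binom{n}{3}=\sum_i|\C_i|$ over all packings in the partition (the partition covers every triple exactly once), and $\lfloor\frac{n}{3}\lfloor\frac{n-1}{2}\rfloor\rfloor$ equals the size of the largest single packing $D(n,3,2)$, with $\C_0$ chosen so that each of the $m-1$ ``free'' coordinates ranges over all $s$ packings while one coordinate is pinned. This is precisely the mechanism by which the bound $M(m,n,d,w)\le\binom{n}{w}^{m-1}A(n,d,w)$ in Proposition~\ref{sw} becomes tight.

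The main obstacle, and the reason for the case split $n\equiv 0,1,2,3\pmod 6$, $n\ne 6,7$, is matching the floor function in the bound exactly and handling the non-uniform sizes of the packings. For $n\equiv 1,3\pmod 6$ the situation is cleanest: all $n-2$ packings are optimal of the same size $\frac{n(n-1)}{6}=\lfloor\frac{n}{3}\lfloor\frac{n-1}{2}\rfloor\rfloor$, so the uniform-size product construction applies directly with $|\C_0|=(n-2)^{m-1}$ and $M=\frac{n(n-1)}{6}$, giving $|\D|=(n-2)^{m-1}\left(\frac{n(n-1)}{6}\right)^m=\binom{n}{3}^{m-1}\cdot\frac{n(n-1)}{6}$ after rewriting $\binom{n}{3}=\frac{n(n-1)(n-2)}{6}$. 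For $n\equiv 0,2\pmod 6$ the triples of an $n$-set do not partition so evenly, so I would pass to a suitable substructure: the statement restricts to $n\ne 6$, and for $n\equiv 2\pmod 6$ one can work with the near-resolvable or maximum-packing structures, possibly via an $(n+1)$- or $(n-1)$-point design and deleting a point. The delicate bookkeeping — confirming that no codewords are lost, that the minimum distance is exactly $4$ (not larger, which would be fine, but we need at least $4$), and that the total count hits the floor expression rather than falling one short — is where the real work lies; the distance claim itself is immediate from $\min(d_3d_2/2,d_1)=\min(4,4)=4$ in the product construction.
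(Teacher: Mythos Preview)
Your overall strategy---applying the product construction with $d_1=d_3=4$, $d_2=2$, $w=3$, and codes $\C_i$ coming from disjoint optimal $2$-$(n,3,1)$ packings---is exactly the paper's approach, and your treatment of the case $n\equiv 1,3\pmod 6$ is correct and complete.

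The gap is in the case $n\equiv 0,2\pmod 6$. You write that ``the triples of an $n$-set do not partition so evenly'' and anticipate ``delicate bookkeeping'' or the possibility of ``falling one short''. This is not so: for $n\equiv 0,2\pmod 6$, $n\ne 6$, the $(n+1)$-point deletion idea that you mention only as a speculation is precisely what works, and it works \emph{cleanly}. Since $n+1\equiv 1,3\pmod 6$ and $n+1\ne 7$, there is a large set of $n-1$ disjoint STS$(n+1)$; deleting from each the blocks through a fixed point yields $n-1$ pairwise disjoint optimal $2$-$(n,3,1)$ packings on the remaining $n$ points, each of size $\frac{(n+1)n}{6}-\frac{n}{2}=\frac{n(n-2)}{6}=\lfloor\frac{n}{3}\lfloor\frac{n-1}{2}\rfloor\rfloor$, and together they cover all $\binom{n}{3}$ triples exactly once. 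So all $\C_i$ have the same size $M=\frac{n(n-2)}{6}$, the uniform product construction applies with $s=n-1$, and
\[
|\D|=(n-1)^{m-1}\left(\frac{n(n-2)}{6}\right)^m=\binom{n}{3}^{m-1}\cdot\frac{n(n-2)}{6},
\]
matching the upper bound of Corollary~\ref{sw=3} with no leftover terms.

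Your discussion of $n\equiv 4,5\pmod 6$ is a distraction here: those residues are excluded from the proposition precisely because the packings are \emph{not} all of equal size in those cases, and the paper treats them separately (as lower bounds only) in the next proposition.
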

\begin{proof} For each $n\equiv 1,3\pmod 6$ and $n\neq 7$, let $s=n-2$. For each $i\in I_s$, let
$\C_i\subset R^Y$ be a CWC$(n,4,3)$ such that $F_i=\{{\rm
supp}(\su):\su\in \C_i\}$ is the block set of a Steiner triple
system over $Y$ and $\{F_i:i\in I_s\}$ forms a large set of Steiner
triple systems. Then apply the product construction  with
$d_1=d_3=4$, $d_2=2$ and $w=3$.

For each $n\equiv 0,2\pmod 6$ and $n\neq 6$, let $s=n-1$. For each
$i\in I_s$, let $\C_i\subset R^Y$ be a CWC$(n,4,3)$ such that
$F_i=\{{\rm supp}(\su):\su\in \C_i\}$ is an optimal $2$-$(n,3,1)$
packing over $Y$ of size $n(n-2)/6$ and $\{F_i:i\in I_s\}$ is a set
of $n-1$ disjoint optimal $2$-$(n,3,1)$ packings. Then apply the
product construction  with same values of $d_1,d_2,d_3$ and $w$.
\end{proof}

Note that when $n\equiv 4,5\pmod 6$, there is no partition of all
triples into optimal packings
\cite{Etzion:1992a,Etzion:1992b,Ji:2006}, but only a partition
 with almost all packings except one being optimal.
By the product construction, if not all $\C_i$ have the same size,
then it is difficult to compute the exact size of $\D$. So we can
only get a lower bound of $\D$ by using disjoint optimal packings
for these cases.

\begin{proposition}\label{lbsw=3d=4}
Let $n\geq 10$. For  all positive integers $m$, $M(m,n,4,3)\geq
(\frac{n(n-2)-2}{6})^m(n-1)^{m-1}$ when $n\equiv 4\pmod 6$ and
$M(m,n,4,3)\geq (\frac{n(n-1)-8}{6})^m(n-2)^{m-1}$ when $n\equiv
5\pmod 6$.
\end{proposition}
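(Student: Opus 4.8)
The plan is to apply the product construction of Section~\ref{d4}, feeding it the partitions of triples described in Section~\ref{sweight}. Fix a set $Y$ of size $n$. I will treat the two residue classes in parallel, the only difference being the number $s$ of disjoint optimal packings available.

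For $n\equiv 4\pmod 6$ with $n\geq 10$, I would invoke Etzion's theorem to partition all triples on $Y$ into $n-1$ optimal $2$-$(n,3,1)$ packings $F_1,\dots,F_{n-1}$ of common size $D(n,3,2)=\frac{n(n-2)-2}{6}$ together with one leftover packing, so that $s=n-1=M(n)$ pairwise disjoint optimal packings are available. For $n\equiv 5\pmod 6$ (hence $n\geq 11$) I would instead use Ji's theorem to obtain $n-2$ pairwise disjoint optimal $2$-$(n,3,1)$ packings of common size $D(n,3,2)=\frac{n(n-1)-8}{6}$, so that $s=n-2=M(n)$. In either case, set $\C_i\subset R^Y$ to be the constant-weight code whose codeword supports are exactly the blocks of $F_i$. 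Since two distinct blocks of a packing meet in at most one point, distinct codewords of $\C_i$ differ in at least $3+3-2=4$ coordinates, so $\C_i$ is a CWC$(n,4,3)$; and since the $F_i$ are pairwise disjoint, two codewords of $\bigcup_i\C_i$ are distinct $3$-subsets of $Y$ and hence differ in at least $2$ coordinates, making $\bigcup_i\C_i$ a CWC$(n,2,3)$. For the outer code I would take $\C_0$ to be an optimal MCWC$(m,s,4,1)$, which by Corollary~\ref{w=1d=4} has size $s^{m-1}$.

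I then apply the product construction with $d_1=d_3=4$, $d_2=2$, and $w=3$: the resulting code $\D$ is an MCWC$(m,n,d,3)$ with $d\geq\min(d_3d_2/2,d_1)=4$, and because all the $\C_i$ have the common size $M=D(n,3,2)$, its size is exactly $|\D|=|\C_0|\,M^m=s^{m-1}M^m$. Substituting the two values of $s$ and $M$ yields the two claimed bounds. I do not expect a genuine obstacle here; the only points needing care are the two distance checks inside the product construction (that $d_1=4$ follows from the packing property and $d_2=2$ from disjointness of the block sets) and confirming that $D(n,3,2)$ equals $\frac{n(n-2)-2}{6}$ and $\frac{n(n-1)-8}{6}$ in the respective residue classes. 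I would also record why one cannot push further this way: absorbing the leftover non-optimal packing of Etzion's or Ji's decomposition into $\C_0$ would destroy the common-size hypothesis on the $\C_i$, after which $|\D|$ is no longer given by a clean product formula, which is exactly the content of the remark preceding the statement.
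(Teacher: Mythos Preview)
Your proposal is correct and matches the paper's intended argument exactly: use the $M(n)$ pairwise disjoint optimal $2$-$(n,3,1)$ packings provided by Etzion ($s=n-1$) or Ji ($s=n-2$), feed them as the inner codes $\C_i$ into the product construction with $d_1=d_3=4$, $d_2=2$, $w=3$, and take $\C_0$ an optimal MCWC$(m,s,4,1)$ of size $s^{m-1}$, yielding $|\D|=s^{m-1}M^m$. The paper gives no separate proof of this proposition---it is meant to follow immediately from the remark preceding it together with the proof of Proposition~\ref{sw=3d=4}---and you have supplied precisely those details, including the verification of the packing numbers $D(n,3,2)$ in the two residue classes and the explanation of why the leftover non-optimal packing is discarded.
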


\section{Asymptotic sizes of $T(\vn,d,\vw)$}\label{asysec}
The following result for $M(m,n,d,w)$ was proved in
\cite{Cheeetal:MCWC}.

\begin{proposition}[Chee {\em et al.} \cite{Cheeetal:MCWC}]
\label{dgeq2m} Given $m,d,w$, let $s$ be the
smallest integer such that $m(w-s)-\delta+1<m$ and
$r=m(w-s)-\delta+1$. Then
\begin{equation}\label{eq:Mmndw}M(m,n,d,w)\leq \left\lfloor\frac{n^m}{w^m}\left\lfloor\frac{(n-1)^m}{(w-1)^m}
\cdots \left\lfloor\frac{(n-s+1)^m}{(w-s+1)^m} \left\lfloor
\frac{(n-s)^r}{(w-s)^r}\right\rfloor \right\rfloor \cdots
\right\rfloor\right\rfloor.\end{equation} Further, let
$t=mw-\delta+1$ and consider $M(m,n,d,w)$ as a function of $n$, then
\begin{equation*}
1\le \limsup_{n\to\infty} \frac{M(m,n,d,w)}{n^t/w^t} \le
\frac{w^t}{(w-s)^t}.
\end{equation*}
In addition, when $t\le m$, $n/w\ge mw-1$ and $n/w$ is a prime
power,  $M(m,n,d,w)=\frac {n^t}{w^t}$ holds.
\end{proposition}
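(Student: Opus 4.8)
The statement collects three claims, which I would prove in this order: the Johnson-type recursion~\eqref{eq:Mmndw}; the constant-factor bound $\limsup_{n\to\infty} M(m,n,d,w)/(n^t/w^t)\le w^t/(w-s)^t$; and the matching construction, giving $\limsup\ge1$ and, when $t\le m$, equality. Throughout I would assume $4\le d\le 2mw$, the cases $d=2$ and $d>2mw$ being disposed of by Lemma~\ref{spread}; in this range $t\ge1$, and minimality of $s$ forces $m(w-s)\ge\delta-1\ge1$ whenever $s\ge1$, so $w-s\ge1$ and $0\le r<m$. For~\eqref{eq:Mmndw} the plan is to iterate Proposition~\ref{rb}(ii) exactly $s$ times, peeling $(n,w)$ down to $(n-s,w-s)$ and generating the nested floors, and then to bound the base term by $M(m,n-s,d,w-s)\le\lfloor((n-s)/(w-s))^r\rfloor$. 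By Corollary~\ref{packingbound} this base term equals the generalized packing number $D(\vn',\vw',r)$ with all components of $\vn'$ equal to $n-s$ and of $\vw'$ equal to $w-s$, and $r<m$; I would bound it by double counting the $r$-admissible tuples that put one point into each of some $r$ distinct parts. There are $\binom{m}{r}(n-s)^r$ of these, each block of the packing contains exactly $\binom{m}{r}(w-s)^r$ of them, and each lies in at most one block, so $D(\vn',\vw',r)\le((n-s)/(w-s))^r$; the inequality $r<m$ is exactly what allows choosing $r$ distinct parts out of $m$. Substituting into the iterated bound gives~\eqref{eq:Mmndw}.

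The asymptotic upper bound then drops out by discarding every floor in~\eqref{eq:Mmndw}: the numerator $\prod_{j=0}^{s-1}(n-j)^m\cdot(n-s)^r$ is $\sim n^{ms+r}=n^t$ because $ms+r=mw-\delta+1=t$, while the constant denominator $\prod_{j=0}^{s-1}(w-j)^m\cdot(w-s)^r$ is at least $(w-s)^{ms+r}=(w-s)^t$, since every factor $w-j$ with $0\le j\le s-1$ exceeds $w-s$. Hence $M(m,n,d,w)\le(1+o(1))\,n^t/(w-s)^t$, which is the claimed bound on the $\limsup$.

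For the lower bound and the exact value I would use Reed--Solomon codes. Take $q=n/w$ to be a prime power with $q\ge mw-1$, split each part $X_i$ into $w$ groups of $q$ points, and identify the $j$-th group with $\mathbb{F}_q$; then a vector $\mathbf{f}=(f_1,\dots,f_m)\in(\mathbb{F}_q^w)^m\cong\mathbb{F}_q^{mw}$ names the codeword that in part $i$ picks the point $f_i(j)$ of the $j$-th group, which has weight exactly $w$ in each part. A short computation shows that the Hamming distance between the codewords named by $\mathbf{f}$ and $\mathbf{g}$ equals twice the $\mathbb{F}_q$-Hamming distance of $\mathbf{f}$ and $\mathbf{g}$ in $\mathbb{F}_q^{mw}$, so letting $\mathbf{f}$ range over an $[mw,t,\delta]_q$ MDS code (an extended Reed--Solomon code, available precisely because $q\ge mw-1$) produces an MCWC$(m,n,d,w)$ of size $q^t=n^t/w^t$. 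This already yields $\limsup_{n\to\infty}M(m,n,d,w)/(n^t/w^t)\ge1$, the limit being taken along the infinite set of admissible $n=wq$. When moreover $t\le m$, one checks $s\le1$ and $r\in\{0,t\}$, so the right-hand side of~\eqref{eq:Mmndw} collapses to $\lfloor n^t/w^t\rfloor$; since $n^t/w^t$ is then an integer, the construction forces $M(m,n,d,w)=n^t/w^t$.

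The floor bookkeeping and the exponent count are routine, and the two substantive steps---the base-case double count and the Reed--Solomon construction---are both clean. The one point I expect to remain genuinely out of reach by these elementary means is the lower bound \emph{without} the constant loss, i.e.\ $\limsup=1$ for all $m,d,w$: the construction above is exact only when $t\le m$, so removing the factor $w^t/(w-s)^t$ in general is precisely the task of the fractional-matching argument (of Section~\ref{asysec}) based on Kahn's theorem.
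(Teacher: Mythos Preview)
Your proposal is correct. Note, however, that the present paper does not prove Proposition~\ref{dgeq2m} at all: it is quoted verbatim from \cite{Cheeetal:MCWC} and stated without proof, so there is no ``paper's own proof'' to compare against here. What can be inferred from the introduction---that the upper bound in \cite{Cheeetal:MCWC} comes from iterating the Johnson inequality (your Proposition~\ref{rb}(ii)) and that the matching lower bound when $t\le m$ is obtained from Reed--Solomon codes under $n/w\ge mw-1$ a prime power---is exactly the route you take. Your base-case bound $M(m,n-s,d,w-s)\le\lfloor((n-s)/(w-s))^r\rfloor$ via the double count on $r$-tuples spread over $r$ distinct parts is clean (the $\binom{m}{r}$ factor cancels and is not needed, but does no harm), and your analysis of the $t\le m$ case correctly identifies that $s\in\{0,1\}$ with the bound collapsing to $\lfloor(n/w)^t\rfloor$ either way. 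The only cosmetic point is that your $\limsup\ge1$ is taken along the subsequence $n=wq$ with $q$ a prime power, which is exactly what the statement of the proposition asserts and all that is claimed in \cite{Cheeetal:MCWC}; you are right that removing the slack $w^t/(w-s)^t$ in general is the content of Theorem~\ref{asythm} and Kahn's theorem, not of this proposition.
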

\vspace{0.5cm}

The last statement of Proposition \ref{dgeq2m} shows that under
certain restrictions, the upper estimate (\ref{eq:Mmndw}) is
asymptotically sharp. In this section, we prove that
(\ref{eq:Mmndw}) is asymptotically sharp for all cases. In fact, we
give a more general result for $T(\vn,d,\vw)$ when all the
components of $\vn$ grow in any fixed proportion.

\subsection{Asymptotic theorem}
Since $T(\vn,d,\vw)= D(\vn,\vw,t)$, where $t=W-\delta+1$ by
Corollary ~\ref{packingbound}, we study the upper bound of
generalized packing numbers $D(\vn,\vw,t)$ instead.

 Let $\vX,\vn,\vw,t$ be defined as before. Suppose that $n_i=c_iv$, $i\in I_m$ for some integer $v$. Given a
$(\vw, t)$-admissible $\vt$, since every element of ${\vX \choose
\vt}$ occurs at most once in a block, we have
\begin{equation}\label{upb}D(\vn,\vw,t)\leq \prod_{i\in
I_m}\binom{c_iv}{t_i}/\binom{w_i}{t_i}\leq \frac{v^t}{\prod_{i\in
I_m}w_i!}\prod_{i\in I_m}c_i^{t_i}(w_i-t_i)!.\end{equation}

Let $C\triangleq \min\{\prod_{i\in I_m}c_i^{t_i}(w_i-t_i)!: \vt
\text{ is $(\vw, t)$-admissible} \}$. Then our asymptotic result can
be stated as follows.

\begin{theorem}\label{asythm} Let $t=W-\delta+1$ and $d=2\delta$.
Then
\begin{equation*}\lim_{v\to\infty}\frac{T(\vn,d,\vw)}{v^t}=\lim_{v\to\infty}\frac{D(\vn,\vw,t)}{v^t}=\frac{C}{\prod_{i\in
I_m}w_i!}.
\end{equation*}
\end{theorem}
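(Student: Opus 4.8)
The upper bound in the theorem is already established by the inequality (\ref{upb}): taking the minimizing $(\vw,t)$-admissible $\vt$ gives $D(\vn,\vw,t)\le C v^t/\prod_{i\in I_m}w_i!+o(v^t)$ once we note $\binom{c_iv}{t_i}\sim (c_iv)^{t_i}/t_i!$. So the entire content is the matching lower bound: we must produce a $t$-$(\vn,\vw,1)$ generalized packing of size $(1-o(1))\,Cv^t/\prod_{i\in I_m}w_i!$. The plan is to realize a generalized packing as a matching in an auxiliary hypergraph and invoke Kahn's theorem on fractional matchings \cite{kahn:1996}: if a $k$-uniform hypergraph is "nearly regular" with small codegrees, then its matching number is asymptotically equal to its fractional matching number, hence to $|V|/k$ up to lower-order terms.

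\textbf{Setup of the hypergraph.} Fix $\vn=(c_1v,\dots,c_mv)$. Let the vertex set be $V=\binom{\vX}{\vt^\ast}$, the set of all $(\vn,\vk,t)$-admissible tuples of type $\vt^\ast$, where $\vt^\ast$ is a \emph{fixed} admissible type achieving the minimum defining $C$; so $|V|=\prod_{i\in I_m}\binom{c_iv}{t^\ast_i}\sim \big(\prod_i c_i^{t^\ast_i}/t^\ast_i!\big)v^t$. (One must check $C$ is attained at a single type and argue that using one fixed type suffices — if several types tie, pick any one; the other admissible types will be handled because a single block covers, for \emph{each} admissible type, a bounded number of $t$-subtuples, so the packing condition for all types is implied by controlling overlaps, see below.) For each potential block $\vB\in\binom{\vX}{\vw}$, form a hyperedge $e_{\vB}=\{\,\vT\in V: \vT\subseteq \vB\,\}$; this has size exactly $\prod_{i\in I_m}\binom{w_i}{t^\ast_i}=:k$, a constant. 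A matching in this hypergraph is a family of blocks no two of which share a common admissible tuple of type $\vt^\ast$; one checks that this is \emph{equivalent} to the generalized packing condition (for every admissible type $\vt$, not just $\vt^\ast$), because two distinct blocks sharing an admissible tuple of \emph{some} type would, by taking a sub-tuple, share one of type $\vt^\ast$ as well — this monotonicity step is where I would be most careful.

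\textbf{Regularity and codegree estimates.} Each vertex $\vT=(T_1,\dots,T_m)$ lies in exactly $\prod_{i\in I_m}\binom{c_iv-t^\ast_i}{w_i-t^\ast_i}$ hyperedges (choose the block extending $\vT$), which is asymptotically $D:=\prod_i (c_iv)^{w_i-t^\ast_i}/(w_i-t^\ast_i)!$, independent of $\vT$ to leading order; so the hypergraph is $(1+o(1))D$-regular. For two distinct vertices $\vT,\vT'$, the number of common hyperedges is $O(v^{\,w_{\mathrm{tot}}-|T_1\cup T_1'|-\cdots-|T_m\cup T_m'|})$ where the exponent is strictly less than $\sum_i(w_i-t^\ast_i)$ since $\vT\neq\vT'$ forces some $|T_i\cup T_i'|>t^\ast_i$; hence codegrees are $o(D)$. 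These are exactly the hypotheses of Kahn's theorem, so $\nu^\ast(H)=(1+o(1))|V|/k$ and $\nu(H)=(1+o(1))\nu^\ast(H)$, giving a packing of size $(1+o(1))|V|/k=(1+o(1))\,v^t\prod_i c_i^{t^\ast_i}/(t^\ast_i!\,\binom{w_i}{t^\ast_i})=(1+o(1))\,Cv^t/\prod_i w_i!$, matching (\ref{upb}). Finally, pass from $v\to\infty$ along $n_i=c_iv$ to general $\vn$ growing in fixed proportion by the usual monotonicity/rounding argument (a packing on smaller ground sets embeds into larger ones, and the $c_i$ can be taken rational then cleared), which also disposes of the $\lim$ versus $\limsup$ issue.

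\textbf{Main obstacle.} The genuinely delicate point is the translation between "matching in $H$" and "generalized packing for all admissible types": I need that controlling overlaps on type-$\vt^\ast$ tuples controls them on every type, which rests on $\vt^\ast$ being coordinatewise "small enough" relative to every other admissible $\vt$ — but $\vt^\ast$ minimizes a product, not a partial order, so this needs an honest argument (probably: if a block $\vB$ covered some admissible $\vT$ of type $\vt$ that is shared with another block, then $\vt$ must satisfy $\vt\le\vw$ and $\sum t_i=t$, and one extracts from $\vT$ a type-$\vt^\ast$ sub-tuple only if $\vt^\ast\le\vt$, which need not hold). The cleaner fix, which I would actually adopt, is to build $H$ on the vertex set $V=\bigcup_{\vt\text{ admissible}}\binom{\vX}{\vt}$ of \emph{all} admissible tuples of \emph{all} types, with $e_{\vB}=\{\vT\in V:\vT\subseteq\vB\}$; then $|e_{\vB}|=\prod_i\sum_{t_i\le w_i,\ \sum t_i=t}\binom{w_i}{t_i}$ is still a constant $k'$, a matching is \emph{exactly} a generalized packing, the hypergraph is again nearly regular (the dominant contribution to each vertex degree comes from vertices of the minimizing type, and one must check all vertices have degree $\Theta(D)$ with the \emph{same} leading constant — this requires the minimizing type to dominate, which is precisely where $C$ enters), codegrees are again $o(D)$, and Kahn's theorem applies verbatim. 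Reconciling the per-type degree constants so that "nearly regular" genuinely holds — rather than just $\Theta(D)$ — is the crux, and is exactly what forces the constant in the theorem to be $C/\prod w_i!$ rather than something larger.
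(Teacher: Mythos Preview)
Your second approach---taking the vertex set to be all admissible tuples of \emph{all} types and the edges to be $e_{\vB}=\{\vT:\vT\subseteq\vB\}$---is exactly the hypergraph $\Gamma_v$ the paper uses, and a matching in it is indeed precisely a generalized packing. The gap is in how you invoke Kahn: the hypergraph is \emph{not} nearly regular. A vertex $\vT$ of type $\vt$ has degree $\prod_i\binom{c_iv-t_i}{w_i-t_i}\sim v^{W-t}\prod_i c_i^{w_i-t_i}/(w_i-t_i)!$, and this leading constant genuinely depends on $\vt$; different admissible types give different degrees to first order, so the Pippenger--Spencer form (``nearly regular $\Rightarrow$ near-perfect matching'') does not apply, and your formula $\nu^*(H)\sim|V|/k'$ is not valid here. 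Your own diagnosis that ``reconciling the per-type degree constants'' is the crux is right, but the constants cannot be reconciled---the degrees really differ.

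The paper sidesteps regularity entirely by using the fractional-matching form of Kahn's theorem: for any $l$-bounded $\H$ and any fractional matching $\theta$ with $\alpha(\theta)=\max_{x\ne y}\sum_{e\ni x,y}\theta(e)$ small, one has $\nu(\H)>(1-\varepsilon)\,\theta(\H)$. Put the constant weight $\theta(e)=C/\big(v^{W-t}\prod_i c_i^{w_i}\big)$ on every edge. At a vertex of type $\vt$ the constraint reads
\[
\deg(\vT)\cdot\theta(e)\ \lesssim\ \frac{\prod_i c_i^{w_i}}{\prod_i c_i^{t_i}(w_i-t_i)!}\cdot\frac{C}{\prod_i c_i^{w_i}}\ =\ \frac{C}{\prod_i c_i^{t_i}(w_i-t_i)!}\ \le\ 1,
\]
and the last inequality is \emph{exactly} the definition of $C$ as the minimum over admissible $\vt$---so the minimum is what makes this a valid fractional matching simultaneously at vertices of every type, not what makes the hypergraph regular. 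Codegrees are $O(v^{W-t-1})$, so $\alpha(\theta)=o(1)$. Then $\theta(\Gamma_v)=|E(\Gamma_v)|\cdot\theta(e)=\prod_i\binom{c_iv}{w_i}\cdot C/\big(v^{W-t}\prod_i c_i^{w_i}\big)\to Cv^t/\prod_i w_i!$, and Kahn gives $\nu(\Gamma_v)\ge(1-o(1))\theta(\Gamma_v)$, matching~(\ref{upb}). In short, the missing ingredient is not a new hypergraph but the right version of Kahn: exhibit an explicit fractional matching of the target size rather than appeal to near-regularity. (Minor slip: your edge size should be $\sum_{\vt}\prod_i\binom{w_i}{t_i}$, a sum of products, not a product of sums.)
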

%
%

For MCWC$(m, n,d,w)$, we can let $c_i=1$ for all $i\in I_m$ and
$v=n$. Then $C$ can be achieved when the $m$-tuple $\vt$ is an
almost constant tuple, that is, values of $t_i$ differ at most one.
Applying Theorem~\ref{asythm}, we get the asymptotic sizes for
$M(m,n,d,w)$.

\begin{corollary}\label{asy} Given $m,d,w$, let $s$ be the
smallest integer such that $m(w-s)-\delta+1<m$ and
$r=m(w-s)-\delta+1$. Let $t=mw-\delta+1$. Then
\begin{equation*}
\lim_{n\to\infty} \frac{M(m,n,d,w)}{n^t}= \frac{1}{w^m(w-1)^m\cdots
(w-s+1)^m (w-s)^r }.
\end{equation*}
\end{corollary}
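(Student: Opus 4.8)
The plan is to obtain Corollary~\ref{asy} as a direct specialization of Theorem~\ref{asythm}. First I would apply that theorem with $\vn=(n,n,\ldots,n)$, $\vw=(w,w,\ldots,w)$, taking $c_i=1$ for every $i\in I_m$ and $v=n$; then $W=mw$, $t=W-\delta+1=mw-\delta+1$, and the theorem yields at once
\[
\lim_{n\to\infty}\frac{M(m,n,d,w)}{n^t}=\frac{C}{(w!)^m},\qquad
C=\min\Bigl\{\textstyle\prod_{i\in I_m}(w-t_i)!\ :\ \vt\le\vw,\ \textstyle\sum_{i\in I_m}t_i=t\Bigr\}.
\]
So the whole task reduces to evaluating $C$ and rewriting $C/(w!)^m$ in the stated form. (By Lemma~\ref{spread} we may assume $d\le 2mw$, i.e.\ $\delta-1<mw$, which keeps all the quantities below well defined.)

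To compute $C$, put $u_i=w-t_i$; the constraints become $0\le u_i\le w$ and $\sum_{i\in I_m}u_i=mw-t=\delta-1$, and we must minimize $\prod_{i\in I_m}u_i!$. Here I would record the one-line swap argument underlying the remark preceding the corollary: if $u_i\ge u_j+2$ for some $i,j$, then replacing $(u_i,u_j)$ by $(u_i-1,u_j+1)$ multiplies $\prod_{i\in I_m}u_i!$ by $(u_j+1)/u_i<1$, strictly decreasing it; hence the minimum is attained at an ``almost constant'' tuple whose coordinates differ by at most one, and such a tuple is admissible since its largest coordinate equals $\lceil(\delta-1)/m\rceil\le w$. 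Now the defining relation $r=m(w-s)-\delta+1$ with $0\le r<m$ is exactly $t=ms+r$ with $0\le r<m$ (subtract: $t-r=mw-m(w-s)=ms$), so the balanced split of $t=\sum_{i\in I_m}t_i$ has $r$ coordinates equal to $s+1$ and $m-r$ equal to $s$; equivalently $r$ of the $u_i$ equal $w-s-1$ and $m-r$ equal $w-s$. Therefore $C=\bigl((w-s-1)!\bigr)^{r}\bigl((w-s)!\bigr)^{m-r}$.

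Finally I would simplify. Since $w!/(w-s)!=w(w-1)\cdots(w-s+1)$ ($s$ factors) and $w!/(w-s-1)!=w(w-1)\cdots(w-s)$ ($s+1$ factors),
\[
\frac{C}{(w!)^m}=\Bigl(\frac{(w-s-1)!}{w!}\Bigr)^{\!r}\Bigl(\frac{(w-s)!}{w!}\Bigr)^{\!m-r}
=\frac{1}{w^m(w-1)^m\cdots(w-s+1)^m\,(w-s)^{r}},
\]
which is the asserted limit (the extra $(w-s)$ factors from the $r$ coordinates using $w-s-1$ peel off as $(w-s)^r$; the case $r=0$ is covered uniformly). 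I do not expect a genuine obstacle here: the content is essentially bookkeeping, and the only places needing a little care are the swap lemma (to be sure the almost-constant tuple really is the minimizer, not merely a candidate) and the index juggling that matches the balanced split of $t$ with the parameters $s$ and $r$ and then collapses the factorial ratios. The one boundary verification I would not skip is that $w-s\ge 0$ always (from $r=m(w-s)-\delta+1\ge0$) and $w-s\ge 1$ whenever $r>0$, so that $s+1\le w$ and the balanced split is legitimate.
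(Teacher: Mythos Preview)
Your proposal is correct and follows exactly the approach indicated in the paper: specialize Theorem~\ref{asythm} with $c_i=1$ and $v=n$, observe that the minimum $C$ is attained at an almost-constant tuple $\vt$, and simplify. The paper states this in one sentence before the corollary without spelling out the swap argument, the identification $t=ms+r$, or the factorial simplification; you have merely filled in those routine details, and they are all fine.
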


\subsection{Proof of Theorem~\ref{asythm}}
Given a  hypergraph $\CH$, let $ E(\CH)$ be the edge set and
$V(\CH)$ be the vertex set. Denote $\nu(\CH)$ the maximum size of a
matching in $\CH$.

 A function $\theta: E(\CH)\longrightarrow R$ is a {\em fractional
 matching} of the hypergraph $\CH$ if $\sum_{e\in E(\CH);x\in e} \theta(e)\leq
 1$ holds for every vertex $x\in V(\CH)$. Let $\theta(\CH)=\sum_{e\in E(\CH)}
 \theta(e)$. The {\em fractional matching
 number}, denoted $\nu^*(\CH)$ is the maximum of $\theta(\CH)$ over all fractional matchings.
Clearly,
\[\nu(\CH)\leq \nu^*(\CH).\]
Kahn \cite{kahn:1996} proved that under certain conditions,
asymptotic equality holds. For a subset $W\subset V(\CH)$, define
$\bar{\theta}(W)=\sum_{ e\in E(\H);W\subset e}\theta(e)$ and $\alpha
(\theta)=\max\{\bar{\theta}(\{x,y\}):x,y\in V(\CH), x\neq y\}$. Here
$\alpha (\theta)$ is a fractional generalization of the codegree. We
say that $\CH$ is {\em $l$-bounded} if each of its edges has size at
most $l$.
\begin{theorem}[Kahn \cite{kahn:1996}]\label{kahn}
For every $l$ and every $\varepsilon>0$ there is a $\sigma$ such
that whenever $\CH$ is an $l$-bounded hypergraph and $\theta$ is a
fractional matching with $\alpha(\theta)<\sigma$, then
\[\nu(\CH)>(1-\varepsilon)\theta(\CH).\]
\end{theorem}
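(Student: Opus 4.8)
The plan is to prove this via the semi-random method, specifically a R\"odl-nibble argument, since the hypotheses---bounded edge size together with a small fractional codegree $\alpha(\theta)$---are exactly the pseudorandomness needed to run such a process. The idea is to extract a large integral matching by repeatedly selecting a small random ``nibble'' of edges with probabilities proportional to the weights $\theta(e)$, discarding those that collide at a shared vertex, and then iterating on the residual hypergraph until almost all of the fractional weight $\theta(\CH)$ has been converted into pairwise disjoint edges.

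Concretely, I would fix a small nibble parameter $\gamma>0$ and, in each round, include each edge $e$ independently with probability $\gamma\theta(e)$. Since $\sum_{e\ni x}\theta(e)\le 1$ at every vertex $x$, the expected number of included edges through $x$ is at most $\gamma$, so when $\gamma$ is small the included edges are almost always pairwise vertex-disjoint; the few that collide are thrown away. An included edge $e$ therefore survives with probability close to $\gamma\theta(e)$, and I would add all surviving edges to the matching, delete their vertices, and pass to the residual hypergraph induced on the still-uncovered vertices.

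The heart of the argument is to track two quantities across rounds: the residual fractional weight and the residual fractional codegree. First I would show that, in expectation, the surviving fractional matching on the residual hypergraph has weight smaller by a factor $\approx(1-c\gamma)$ yet still has small codegree---here the hypothesis $\alpha(\theta)<\sigma$ is essential, since it guarantees that the correlations introduced when two edges compete for a vertex are of lower order and do not inflate the codegree. Second, I would invoke a concentration inequality (an Azuma--Hoeffding martingale bound, or Talagrand's inequality applied to the edge-indicator variables) to argue that these quantities are sharply concentrated about their expectations, so that the \emph{typical} residual hypergraph again satisfies the hypotheses with comparable parameters. Iterating a bounded number $T\approx\varepsilon^{-1}$ of rounds then drives the residual weight below $\varepsilon\,\theta(\CH)$, and the disjoint edges accumulated over all rounds form an integral matching of size exceeding $(1-\varepsilon)\theta(\CH)$; choosing $\sigma$, and correspondingly $\gamma$ and $T$, small enough as functions of $l$ and $\varepsilon$ makes all the error terms fit.

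The main obstacle I anticipate is maintaining the pseudorandomness---the small-codegree condition---uniformly through all $T$ iterations while simultaneously establishing the concentration needed to treat each residual hypergraph deterministically. Controlling how $\alpha(\theta)$ degrades under repeated vertex deletion, and balancing the round count $T$ against the nibble size $\gamma$ so that the accumulated multiplicative errors stay below $\varepsilon$, is the delicate calibration on which the whole argument rests.
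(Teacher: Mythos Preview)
The paper does not prove this statement at all: Theorem~\ref{kahn} is quoted verbatim from Kahn~\cite{kahn:1996} and used as a black box in the proof of Theorem~\ref{asythm}. There is therefore no ``paper's own proof'' to compare your proposal against.

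That said, your outline is the right one for the cited result itself: Kahn's original argument is indeed a R\"odl-nibble / semi-random construction of essentially the shape you describe, with iterated random sampling proportional to $\theta(e)$, deletion of collided edges, and concentration estimates to control the residual fractional matching across rounds. So your plan is faithful to the source, just not to anything appearing in this paper.
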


We apply Kahn's Theorem to prove Theorem~\ref{asythm}. First, we
define a hypergraph $\Gamma_v$ with $v$ defined in
$\vn=(c_1v,\ldots,c_mv)$.

Let $\CT$ be the set of all $(\vw, t)$-admissible $m$-tuples. Given
$\vA \in \binom{\vX}{\vw}$, let $\E(\vA)=\cup_{\vt\in \CT}
\binom{\vA}{\vt}$. Then construct a hypergraph $\Gamma_v$ with
vertex set $\cup_{\vt\in \CT} \binom{\vX}{\vt}$ and edge set
$\{\E(\vA):\vA \in \binom{\vX}{\vw}\}$. Note that for any two
distinct blocks $\vA_1$ and $\vA_2$ in the generalized packing, we
have $\E(\vA_1) \cap \E(\vA_2)=\emptyset$. Hence a $t$-$(\vn,\vw,1)$
generalized packing corresponds to a matching
 in $\Gamma_v$, i.e., $\nu(\Gamma_v)=D(\vn,\vw,t)$.

It suffices to verify the conditions of Theorem~\ref{kahn} and to
produce a fractional matching $\theta$ of the hypergraph $\Gamma_v$
of the desired size. It is easy to know that $\Gamma_v$ is
$l$-bounded with $l=\sum_{\vt\in \CT}\prod_{i\in I_m}\binom{w_i}{t_i}$. Now consider
the function $\theta: E(\Gamma_v)\longrightarrow R$ by
\[\theta(e)=\frac{C}{v^{W-t}\prod_{i\in I_m}c_i^{w_i}}\ , \]
for every $e\in E(\Gamma_v)$. We first check $\theta$ is a
fractional matching. For any vertex $x\in V(\Gamma_v)$, which is an
$(\vn,\vw,t)$-admissible $m$-tuple of disjoint sets of sizes
 $\vt$, we have

\begin{equation*}
\begin{split}
\deg(x)=&\prod_{i\in I_m} \binom{c_iv-t_i}{w_i-t_i}\leq \prod_{i\in
I_m} \frac{(c_iv)^{w_i-t_i}}{(w_i-t_i)!}\\=&v^{W-t}\prod_{i\in I_m}
\frac{c_i^{w_i}}{c_i^{t_i}(w_i-t_i)!}\leq \frac{v^{W-t}\prod_{i\in
I_m} c_i^{w_i}}{C}.
\end{split}
\end{equation*}
Hence, $\sum_{e\in E(\Gamma_v);x\in e} \theta(e)\leq 1$ and $\theta$
is indeed a fractional matching. Next, we compute  $\alpha(\theta)$.
For every $x,y\in V(\Gamma_v)$, then $x\cup y$ is
$(\vn,\vw,t')$-admissible with $t'\geq t+1$. Here, the union
operation is component-wise. Then the codegree of $x$ and $y$ is
\[\deg(x,y)=O(v^{W-t-1}).\]
Hence $\alpha(\theta)=\deg(x,y)\cdot \theta(e)=o(1)$ when
$v\to\infty$.

Finally, we apply Kahn's Theorem.
\begin{equation*}
\begin{split}
\lim_{v\to\infty} \frac{D(\vn,\vw,t)}{v^t} & =\lim_{v\to\infty}
\frac{\nu(\Gamma_v)}{v^t}  \geq \lim_{v\to\infty}
\frac{\theta(\Gamma_v)}{v^t}
= \lim_{v\to\infty} \frac{|E(\Gamma_v)|\times \theta(e)}{v^t}\\
&= \lim_{v\to\infty}  \prod_{i\in I_m}\binom{c_iv}{w_i} \times
\frac{C}{v^{W-t}\prod_{i\in I_m}c_i^{w_i}}/v^t=\frac{C}{\prod_{i\in
I_m}w_i!}.
\end{split}
\end{equation*}
The other inequality comes from the upper bound (\ref{upb}).

\section{Two Dimensional Multiply Constant-Weight
Codes}\label{2dmcwc} Recall that when the lengths of different parts
of a codeword are constant, say $\vn=(n,n,\ldots,n)$, then each
codeword could be considered as an $m\times n$ binary matrix.
In this section, we impose additional weight constraints on all columns.
We note that these codes have applications
in optical storage in holographic memory \cite{Ordentlichetal:2000},
 crossbar arrays of resistive devices \cite{Ordentlichetal:2012}, and
power line communications \cite{Cheeetal:2013c}.

Let $\vn=(n,n,\ldots,n)$ and $\vw=(w_1,\ldots,w_m)$. A {\em two
dimensional multiply constant-weight code} $2$DMCWC$(m,n,d,\vw,l)$
is an MCWC$(\vn,d,\vw)$ in a matrix form such that each column of
 codewords has constant weight $l$. Let $M(m,n,d,\vw,l)$ denote
the largest size of a $2$DMCWC$(m,n,d,\vw,l)$. If $w_i=w$ for all
$i\in I_m$, we simply write $2$DMCWC$(m,n,d,w,l)$ and
$M(m,n,d,w,l)$.

\subsection{Upper Bounds}

%

An {\em $\alpha$-parallel class} of a set system is a subset of the blocks that each point appears exactly $\alpha$ times. 

\begin{definition}
Let $(Y,\B)$ be a $2$-$(M,K,\lambda)$ packing. If the blocks can be
arranged into an $m \times n$ array $R$ such that
\begin{enumerate}
\item[(1)] each entry in $R$ is either empty or a block;
\item[(2)] the blocks in the $i$-th row form a $w_i$-parallel class;
\item[(3)] the blocks in each column form an $l$-parallel class.
\end{enumerate}
Then, we call it a {\em doubly resolvable packing}, and denote it by
DRP$(M,\lambda;\vw,l;m,n)$. Again we write DRP$(M,\lambda;w,l;m,n)$
if $w_i=w$ for all $i\in I_m$.
\end{definition}

\begin{proposition}
A DRP$(M,\lambda;\vw,l;m,n)$ is equivalent to a $2$DMCWC$(m,n,d,$
$\vw,l)$ of size $M$, where $d=2(W-\lambda)$.
\end{proposition}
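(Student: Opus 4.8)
The plan is to establish a bijective correspondence between the blocks of a DRP$(M,\lambda;\vw,l;m,n)$ and the codewords of a $2$DMCWC$(m,n,d,\vw,l)$ by reading the array $R$ of the doubly resolvable packing. First I would fix a ground set $Y$ with $|Y|=M$ for the underlying $2$-$(M,K,\lambda)$ packing, and identify $R^Y$ with binary vectors of length $M$. To each block $B$ of the packing I associate its characteristic vector $\chi_B\in R^Y$; since the column index set has size $n$ and each column is an $l$-parallel class, reading the $n$ columns gives an $m\times n$ array whose $(i,j)$ entry is the characteristic vector of the block in cell $(i,j)$ (or the zero vector if the cell is empty). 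Concatenating the $n$ entries in row $i$ yields a binary vector of length $nM$; stacking the $m$ rows produces a codeword in $R^X$ with $X=I_m\times Y$ and $\vX=(\{1\}\times Y,\dots,\{m\}\times Y)$ — wait, more precisely each part should be indexed by the $n$ columns, so I would instead let the $i$-th part be $I_n\times Y$ and record in the $j$-th column-slot the vector $\chi_{B_{ij}}$. The set of all such codewords, as $R$ ranges over all doubly resolvable packings realizing the parameters, is the candidate $2$DMCWC.

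Next I would verify the three defining properties of a $2$DMCWC of the claimed parameters. The row-weight constraint: the blocks in row $i$ form a $w_i$-parallel class, meaning every point of $Y$ lies in exactly $w_i$ of them, so the sum of the characteristic vectors in row $i$ is the all-$w_i$ vector; hence each of the $M$ coordinates contributes, and the total Hamming weight of row $i$ is $\sum_{\text{cells}}|B_{ij}|$. Actually the cleaner statement is that property (2) forces the $i$-th part to have constant weight — I should phrase the code so that "weight $w_i$ in part $i$" corresponds exactly to "each point of $Y$ appears $w_i$ times among the row-$i$ blocks," which is literally the $w_i$-parallel class condition once we think of the $M$ points as indexing which of the $w_i$ occurrences we see. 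Similarly property (3), the $l$-parallel class condition on columns, gives that each column of the matrix form has constant weight $l$. The distance computation is the crux: for two distinct codewords coming from arrays $R$ and $R'$, I would relate $d_H$ to the number of point-pairs on which blocks differ. Using that in a $2$-$(M,K,\lambda)$ packing every pair of points lies in at most $\lambda$ common blocks, and that a codeword corresponds to a family of blocks covering the pair-set, the minimum distance works out to $d=2(W-\lambda)$ where $W=\sum w_i$; here one counts, for each point $y\in Y$, that it appears $W=\sum_i w_i$ times total across the whole array, and two codewords can agree in at most $\lambda$ blocks through any given pair, forcing them to differ in at least $W-\lambda$ positions "per coordinate direction," doubling to $2(W-\lambda)$.

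For the converse direction I would start from an arbitrary $2$DMCWC$(m,n,d,\vw,l)$ of size $M$, take its $M$ codewords, and use each codeword to index a point of a new ground set $Y$; then for each position $(i,j)$ in the matrix the set of codewords having a $1$ there defines a block, and the row/column weight constraints of the code translate back into the $w_i$-parallel and $l$-parallel class conditions, while the distance $d$ translates into the packing bound $\lambda=W-d/2$. I would need to check that distinct positions give distinct "block structure" so that the array genuinely has one block (or empty cell) per entry, and that the minimum distance exactly — not merely at least — $2(W-\lambda)$ matches the packing parameter $\lambda$; this requires that the upper bound $\lambda$ on pair-coverage is achieved, which is automatic since we are free to take $\lambda$ to be the actual maximum pair-coverage.

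The main obstacle I expect is bookkeeping the two-directional resolvability simultaneously and getting the distance formula exactly right — in particular keeping straight which points of the new ground set correspond to codewords versus which correspond to array positions, and confirming that the "at most $\lambda$" in the packing definition is the right quantity (rather than "exactly $\lambda$") so that the equivalence is genuinely an "if and only if." Everything else — the parallel-class translations and the concatenation of rows into a length-$nM$ binary vector — is routine once the dictionary between points, blocks, positions, and codewords is pinned down. I would therefore spend most of the write-up carefully stating this dictionary and then the distance count, deferring the symmetric verifications to a remark that the construction is plainly reversible.
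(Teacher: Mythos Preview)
Your converse direction is right and is exactly the paper's construction read backwards: the $M$ codewords become the points of $Y$, and for each array cell $(i,j)$ the block $R_{ij}$ is the set of codewords with a $1$ in position $(i,j)$.

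Your forward direction, however, is confused. You associate to each \emph{block} its characteristic vector in $R^Y$ (length $M$) and then concatenate along a row to get a vector of length $nM$; this produces an object of the wrong shape (a $2$DMCWC codeword is an $m\times n$ binary matrix, length $mn$, not $mnM$), and your sentence ``the set of all such codewords, as $R$ ranges over all doubly resolvable packings'' shows the core misunderstanding: a single DRP on $M$ points must yield $M$ codewords, not one codeword per DRP. The correct forward map is simply the inverse of your converse: for each \emph{point} $x\in Y$ define the $m\times n$ matrix $\su^x$ by $\su^x_{ij}=1$ iff $x\in R_{ij}$. Then row~$i$ of $\su^x$ has weight $w_i$ precisely because row~$i$ of $R$ is a $w_i$-parallel class, and column~$j$ has weight $l$ because column~$j$ of $R$ is an $l$-parallel class. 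For two distinct points $x,y$, the Hamming distance $d_H(\su^x,\su^y)$ equals $|\{(i,j):x\in R_{ij}\}|+|\{(i,j):y\in R_{ij}\}|-2|\{(i,j):\{x,y\}\subseteq R_{ij}\}|=W+W-2\cdot(\text{number of blocks containing }\{x,y\})\ge 2(W-\lambda)$, directly from the $2$-packing bound. Your attempted distance count (``agree in at most $\lambda$ blocks through any given pair \ldots\ per coordinate direction'') does not correspond to any actual computation once the roles of points and blocks are set straight.
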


\begin{proof}
Suppose $(Y,\B)$ is a DRP$(M,\lambda;\vw,l;m,n)$ such that the
blocks are arranged into an $m\times n$ array $R$. Denote the
$(i,j)$-th entry of $R$ by $R_{ij}$, where $i\in I_m$ and $j\in
I_n$.

For each $x\in Y$, we construct an $m\times n$ binary matrix $\su^x$
with the $(i,j)$-th entry defined as
\[\su^x_{ij} =
\begin{cases}
1, & \text{if } x \in R_{ij}; \\
0, & \text{otherwise.}
\end{cases}\]
Then $\C=\{\su^x\mid x\in Y\}$ is  a $2$DMCWC$(m,n,d,\vw,l)$ of size
$M$, where $d=2(W-\lambda)$.

The construction can be easily reversed to obtain the converse.
\end{proof}

\begin{example}
\label{small} Here is an example of a DRP$(3,3;2,2;3,3)$ over
$Y=\bz_3$,

\[
\begin{array}{|c|c|c|}
\hline
01 & 12 & 02 \\
\hline
02 & 01 & 12 \\
\hline
12 & 02 & 01 \\
\hline
\end{array}\]
from which we obtain a $2$DMCWC$(3,3,6,2,2)$ by taking the codewords

\[\su^0=\left(\begin{array}{lll}
1 & 0 & 1 \\
1 & 1 & 0 \\
0 & 1 & 1 \\
\end{array}\right),
\su^1=\left(\begin{array}{lll}
1 & 1 & 0 \\
0 & 1 & 1 \\
1 & 0 & 1 \\
\end{array}\right),
\su^2=\left(\begin{array}{lll}
0 & 1 & 1 \\
1 & 0 & 1 \\
1 & 1 & 0 \\
\end{array}\right).\]
\end{example}

\begin{lemma}[Upper Bound]
\label{bound} If $\sum_{i\in I_m} w_i^2-n\lambda>0$, and there
exists a DRP$(M,\lambda;$ $\vw,l;m,n)$, or equivalently a
$2$DMCWC$(m,n,d,\vw,l)$ of size $M$, where $d=2(W-\lambda)$, then
$$M\le\dfrac{n(nl-\lambda)}{\sum_{i\in I_m} w_i^2-n\lambda}.$$
\end{lemma}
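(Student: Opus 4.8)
The plan is to exploit a double-counting argument on incidences between points and block-entries in the array $R$ realizing the DRP$(M,\lambda;\vw,l;m,n)$. Fix the point set $Y$ with $|Y|=M$. For each point $x\in Y$, let $r_i(x)$ denote the number of blocks in row $i$ of $R$ that contain $x$; by the $w_i$-parallel-class condition, summing the sizes of all blocks in row $i$ gives $\sum_{x\in Y} r_i(x) = w_i M$ (each of the $M$ points lies in exactly $w_i$ of the row-$i$ blocks, since row $i$ is a $w_i$-parallel class on $Y$). Actually the cleaner bookkeeping is on columns: the $l$-parallel-class condition says each column, viewed as a set of blocks, covers every point of $Y$ exactly $l$ times, so each column contains exactly $lM/k_{\mathrm{avg}}$ blocks — but rather than chase block counts I would count, for each point $x$, the quantity $c_j(x) = \#\{i : x\in R_{ij}\}$, the number of rows in which column $j$ has a block through $x$; then $\sum_{j\in I_n} c_j(x)$ counts the blocks through $x$ in the whole array, which is $\sum_{i\in I_m} r_i(x)$, and by the row condition this equals $\sum_{i\in I_m} w_i = W$ for every $x$. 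Meanwhile the column condition gives $\sum_{x\in Y} c_j(x) = lM$ for each $j$ (column $j$ is an $l$-parallel class, so it covers each point $l$ times... wait, that over-counts: $\sum_x c_j(x)$ sums block-sizes in column $j$, which is $l\cdot|Y| = lM$). Good — so $\sum_{j} \sum_x c_j(x) = nlM$ and also $\sum_x \sum_j c_j(x) = WM$... these must be consistent, forcing $nl = W$, which is a structural identity I'd note but which is not quite what we want.

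The actual leverage comes from a \emph{second moment}. The key step is to count ordered pairs $(x,y)$ of distinct points together with a block containing both. For a point-pair $\{x,y\}$, the $2$-$(M,K,\lambda)$ packing property bounds the number of blocks through both by $\lambda$. Summing over block-entries: $\sum_{(i,j)} |R_{ij}|(|R_{ij}|-1) = \sum_{x\ne y} \lambda_{xy}$ where $\lambda_{xy}\le\lambda$ is the number of blocks on $\{x,y\}$, so $\sum_{(i,j)} |R_{ij}|(|R_{ij}|-1) \le \lambda M(M-1)$. Now I want a lower bound on the left side via convexity. Using the row totals: $\sum_{j} |R_{ij}| = w_i M$ over the (at most $n$) entries in row $i$; but more useful is to organize by columns. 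For column $j$, the nonempty entries are blocks forming an $l$-parallel class over $Y$, so $\sum_{i} |R_{ij}| = lM$. Summing block-sizes another way: every point $x$ lies in exactly $W$ blocks total (established above), and the total number of (point, containing-block) incidences is $WM$, matching $\sum_{(i,j)}|R_{ij}| = \sum_i w_i M = WM$. To get the quadratic term from below I would instead count incidences \emph{between rows}: fix a point $x$; it lies in $r_i(x)=w_i$ blocks of row $i$ (wait — is $r_i(x)=w_i$? Row $i$ is a $w_i$-parallel class on $Y$, so yes, $x$ is in exactly $w_i$ blocks of row $i$). Then $\sum_i r_i(x)^2 = \sum_i w_i^2$. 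Count triples $(x, \text{block }B, \text{block }B')$ with $B,B'$ in the same row, both containing $x$: this is $\sum_x \sum_i w_i^2 = M\sum_i w_i^2$; it also equals (diagonal terms) $\sum_{(i,j)}|R_{ij}|$ $=$ ... no, the diagonal $B=B'$ gives $WM$, and the off-diagonal $B\ne B'$ in the same row, both through $x$, summed over $x$, is $\sum_{\text{same-row }B\ne B'}|B\cap B'|$. Since distinct blocks share at most $\lambda$ points, and we also need an upper count: the number of same-row ordered pairs of distinct blocks is $\sum_i b_i(b_i-1)$ where $b_i\le n$ is the number of blocks in row $i$. This is the obstruction point.

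The main obstacle, I expect, is assembling these counts into the stated inequality $M \le n(nl-\lambda)/(\sum_i w_i^2 - n\lambda)$ with exactly the right constants, because the bound mixes a row-invariant ($\sum w_i^2$), a column parameter ($l$), and the array width $n$ in a way that suggests the clean route is: count same-row block-pairs through a common point, bound block-overlaps by $\lambda$ and block-counts per row by $n$, and also use that each column has $nl/?$... concretely, I would (i) show $\sum_i w_i^2 \cdot M = WM + \sum_{x}\sum_i (\text{pairs of row-}i\text{ blocks through }x)$, (ii) bound the pair-sum above by $\lambda \cdot \#\{\text{same-row block pairs}\} \le \lambda \sum_i b_i(b_i-1) \le \lambda \sum_i n(b_i-1) = \lambda n(B_{\mathrm{tot}} - m)$ where $B_{\mathrm{tot}}=\sum_i b_i$ is the total number of blocks, and then (iii) relate $B_{\mathrm{tot}}$ and $W$, $n$, $l$ via the column condition ($B_{\mathrm{tot}}$ equals the sum over columns, each column having $lM/\bar k$ blocks — needs the parallel-class size), finally solving the resulting linear inequality in $M$. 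The hypothesis $\sum_i w_i^2 - n\lambda > 0$ is exactly what guarantees the coefficient of $M$ after rearrangement is positive so the direction of the inequality is preserved; I would flag that as the reason the hypothesis appears. I'd double-check all constants against Example~\ref{small}, where $m=n=3$, $w_i=l=2$, $\lambda=3$, giving $\sum w_i^2 - n\lambda = 12-9=3>0$ and $n(nl-\lambda)/3 = 3\cdot 3/3 = 3 = M$, so the bound is tight there — a good sanity check that the final algebra is right.
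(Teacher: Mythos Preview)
Your plan has a genuine gap at step (ii): the claim that two distinct blocks of a $2$-$(M,K,\lambda)$ packing share at most $\lambda$ points is false for $\lambda>1$. The packing condition bounds, for each \emph{pair of points}, the number of blocks containing it; it says nothing about $|B\cap B'|$ for a pair of blocks. (Concretely, $B=\{1,2,3,4,5\}$ and $B'=\{1,2,3,4,6\}$ sit inside a $2$-$(6,5,2)$ packing yet share four points.) Worse, in a DRP the same block can occupy several cells of a row---as in Example~\ref{small}, where $01$ appears on the diagonal---so ``same-row distinct entries'' need not even be distinct as sets. Even granting the false bound, your chain $M(\sum_i w_i^2-W)\le \lambda\sum_i b_i(b_i-1)\le \lambda\, mn(n-1)$ would give only $M\le \lambda mn(n-1)/(\sum_i w_i^2-W)$, which in Example~\ref{small} yields $M\le 9$ rather than the tight $M\le 3$; so the route does not recover the stated bound.

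You already wrote down the two pieces that \emph{do} work: the pair-count inequality $\sum_{i,j}|R_{ij}|(|R_{ij}|-1)\le \lambda M(M-1)$ and the row identity $\sum_{j}|R_{ij}|=Mw_i$. The missing idea is simply to lower-bound $\sum_j |R_{ij}|^2$ row by row via Cauchy--Schwarz (or convexity) over the $n$ columns:
\[
\sum_{j\in I_n}|R_{ij}|^2 \;\ge\; \frac{1}{n}\Bigl(\sum_{j\in I_n}|R_{ij}|\Bigr)^2 \;=\; \frac{(Mw_i)^2}{n}.
\]
Summing over $i$ and combining with the pair-count inequality gives $\lambda M(M-1)\ge \frac{M^2}{n}\sum_i w_i^2 - Mnl$ (using $W=nl$), and the hypothesis $\sum_i w_i^2-n\lambda>0$ then lets you solve for $M$ with the inequality pointing the right way. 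This is exactly the paper's argument; your same-row block-pair detour is not needed.
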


\begin{proof}
Let $R$ be the $m\times n$ array of a DRP$(M,\lambda;\vw,l;m,n)$ and
$f_{ij}=|R_{ij}|$. Then
\[\sum_{i\in I_m}\sum_{j\in I_n}f_{ij}=Mnl=M\sum_{i\in I_m} w_i.\]
By the definition of a doubly resolvable packing,
\begin{eqnarray*}
\lambda M(M-1) & \ge & \sum_{i\in I_m} \sum_{j\in I_n} f_{ij}(f_{ij}-1) \\
& = & \sum_{i\in I_m}\sum_{j\in I_n} f_{ij}^2-Mnl.
\end{eqnarray*}
By Cauchy--Schwartz inequality, for any $i\in I_m$,
\begin{eqnarray*}
\sum_{j\in I_n} f_{ij}^2\ge\dfrac{(\sum_{j\in I_n}
f_{ij})^2}{n}=\dfrac{(Mw_i)^2}{n}.
\end{eqnarray*}
So
\[\lambda M(M-1)\ge \sum_{i\in  I_m}\dfrac{(Mw_i)^2}{n}-Mnl.\]
Hence
\[M\le\dfrac{n(nl-\lambda)}{\sum_{i\in I_m} w_i^2-n\lambda},\]
provided that $\sum_{i\in I_m} w_i^2-n\lambda>0$. Note that the
right hand side is always positive since $n^2l=n\sum_{i\in I_m}
w_i\geq \sum_{i\in I_m} w_i^2$.
\end{proof}

If the bound in Lemma~\ref{bound} is achieved, then all the blocks in
the $i$-th row are of the same size $\frac{Mw_i}{n}$, and each
pair appears in exactly $\lambda$ blocks. It is easy to check that
the $2$DMCWC$(3,3,6,2,2)$ constructed in Example~\ref{small} is
optimal.

Further, if we let $f=\left\lfloor\dfrac{Ml}{m}\right\rfloor$ and
$r=Ml-mf$. Then we can improve the above bound by using
$$\sum_{i\in I_m}f_{ij}^2\geq (m-r)f^2+r(f+1)^2.$$

\begin{lemma}[Improved Upper Bound] If there
exists a DRP$(M,\lambda;\vw,l;m,n)$,  or equivalently a
$2$DMCWC$($ $m,n,d,\vw,l)$ of size $M$, where $d=2(W-\lambda)$, then
\label{ibound}
\begin{eqnarray*}
\lambda M(M-1) & \ge & n[(m-r)f^2+r(f+1)^2]-Mnl \\
& = & n(mf^2+2rf+r)-Mnl,
\end{eqnarray*}where $f=\left\lfloor\dfrac{Ml}{m}\right\rfloor$ and
$r=Ml-mf$.
\end{lemma}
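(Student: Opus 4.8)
The plan is to repeat the counting argument of Lemma~\ref{bound} but replace the Cauchy--Schwartz bound on $\sum_{i\in I_m} f_{ij}^2$ column-by-column with the stronger inequality that holds when the column sum $\sum_{i\in I_m} f_{ij}$ is a fixed integer rather than a real number. The starting point is the same two identities and one inequality from the proof of Lemma~\ref{bound}: first the double count $\sum_{i\in I_m}\sum_{j\in I_n} f_{ij} = Mnl = M\sum_{i\in I_m} w_i$, which applied to a single column $j$ (using the $l$-parallel class property (3): each of the $M$ points appears exactly $l$ times in column $j$) gives $\sum_{i\in I_m} f_{ij} = Ml$ for every $j\in I_n$; and second the pair-count inequality $\lambda M(M-1) \ge \sum_{i\in I_m}\sum_{j\in I_n} f_{ij}(f_{ij}-1) = \sum_{i\in I_m}\sum_{j\in I_n} f_{ij}^2 - Mnl$.

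The only new ingredient is the elementary fact that among nonnegative integers $f_{1j},\dots,f_{mj}$ with prescribed sum $S=Ml$, the sum of squares $\sum_{i\in I_m} f_{ij}^2$ is minimized by the \emph{balanced integer} partition: $m-r$ of the parts equal to $f=\lfloor S/m\rfloor$ and $r=S-mf$ of the parts equal to $f+1$. This is a standard convexity/smoothing argument --- if two parts differ by at least $2$, moving one unit from the larger to the smaller strictly decreases the sum of squares --- so I would state it in a line and cite convexity rather than belabor it. This yields, for each column $j$,
\[
\sum_{i\in I_m} f_{ij}^2 \;\ge\; (m-r)f^2 + r(f+1)^2 \;=\; (m-r)f^2 + r(f^2+2f+1) \;=\; mf^2 + 2rf + r,
\]
where $f=\lfloor Ml/m\rfloor$ and $r=Ml-mf$ are independent of $j$.

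Summing this over the $n$ columns gives $\sum_{i\in I_m}\sum_{j\in I_n} f_{ij}^2 \ge n(mf^2+2rf+r)$, and substituting into the pair-count inequality yields exactly
\[
\lambda M(M-1) \;\ge\; n(mf^2+2rf+r) - Mnl,
\]
which is the claimed bound. There is no real obstacle here; the only point requiring a word of care is verifying that the balanced-integer lower bound is legitimately usable, i.e. that the column sums really are the \emph{constant} $Ml$ (this is where property (3) of the DRP is essential and is what lets the same $f,r$ work in every column), and noting that the bound of Lemma~\ref{ibound} dominates that of Lemma~\ref{bound} precisely because $(m-r)f^2+r(f+1)^2 \ge (Ml)^2/m = \sum_{i\in I_m}(Mw_i)^2/(n^2)\cdot(n^2/m)$ --- in effect, integrality of the $f_{ij}$ is being exploited where Cauchy--Schwartz only saw reals. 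If a reader wants the bound in the more transparent ``quadratic in $M$'' form one would just rearrange, but since $f$ and $r$ themselves depend on $M$ the inequality is best left in the implicit form stated.
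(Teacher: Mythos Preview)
Your proof is correct and follows exactly the approach the paper sketches: the paper merely records the key inequality $\sum_{i\in I_m} f_{ij}^2 \ge (m-r)f^2 + r(f+1)^2$ (applied column-by-column, using that the column sum is the fixed integer $Ml$) and leaves the rest implicit, and you have filled in precisely those details. One minor remark: your parenthetical comparison with Lemma~\ref{bound} at the end is garbled (the displayed equality there does not hold, and in fact Lemma~\ref{bound} bounds $\sum_j f_{ij}^2$ row-wise via Cauchy--Schwartz while Lemma~\ref{ibound} bounds $\sum_i f_{ij}^2$ column-wise via integrality, so the two are not directly comparable without further assumptions), but this aside plays no role in the proof itself.
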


Note that the upper bound of $M(m,n,d,\vw,l)$ is the biggest $M$
satisfies the
 inequality in Lemma~\ref{ibound}, which is usually achieved when equality holds.

\begin{example}
\label{osmall}Here are two examples of optimal two dimensional
multiply constant-weight codes that achieve the bound in
Lemma~\ref{ibound}. We list the equivalent doubly resolvable
packings instead.

An optimal $2$DMCWC$(6,6,20,2,2)$ of size $4$:
$$\begin{array}{|c|c|c|c|c|c|}
\hline 01 & 23 & 0 & 1 & 2 & 3 \\ \hline 23 & 01 & 3 & 0 & 1 & 2 \\
\hline 2 & 3 & 02 & 13 & 0 & 1 \\ \hline 1 & 2 & 13 & 02 & 3 & 0 \\
\hline 0 & 1 & 2 & 3 & 03 & 12 \\ \hline 3 & 0 & 1 & 2 & 12 & 03 \\
\hline
\end{array}\ .$$

An optimal $2$DMCWC$(9,9,32,2,2)$ of size $6$:
$$\begin{array}{|c|c|c|c|c|c|c|c|c|}
\hline 01 & 45 & 12 & 2 & 3 & 0 & 3 & 5 & 4 \\ \hline 25 & 13 & 04 &
3 & 0 & 1 & 4 & 2 & 5 \\ \hline 34 & 02 & 35 & 0 & 1 & 2 & 5 & 4 & 1
\\ \hline
4 & 5 & 5 & 03 & 14 & 03 & 2 & 1 & 2 \\ \hline 3 & 4 & 2 & 15 & 05 &
24 & 1 & 3 & 0 \\ \hline 5 & 1 & 4 & 24 & 23 & 15 & 0 & 0 & 3 \\
\hline 0 & 2 & 3 & 5 & 4 & 3 & 01 & 45 & 12 \\ \hline 1 & 3 & 0 & 4
& 2 & 5 & 25 & 13 & 04 \\ \hline 2 & 0 & 1 & 1 & 5 & 4 & 34 & 02 &
35 \\ \hline
\end{array}\ .$$
\end{example}

\subsection{Constructions} First we show that concatenating small optimal
two dimensional MCWCs gives big optimal two dimensional MCWCs.

\begin{proposition}
Let $a$ be a positive integer. Suppose there exists an optimal
$2$DMCWC$(m,n,d,\vw,l)$ of size $M$ achieving the bound in
Lemma~\ref{bound} or Lemma~\ref{ibound}. Then there exists an
optimal $2$DMCWC$(am,n,$ $ad,\vw',al)$ of size $M$, where $\vw'$ is a
vector of length $am$ by copying $\vw$ $a$ times.
\end{proposition}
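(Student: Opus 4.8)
The plan is to take $a$ vertical copies of the optimal $2$DMCWC$(m,n,d,\vw,l)$ and stack them. Concretely, let $\C$ be an optimal $2$DMCWC$(m,n,d,\vw,l)$ of size $M$, with codewords $\su^1,\dots,\su^M$ realized as $m\times n$ binary matrices. Define $\C'$ by replacing each codeword $\su^k$ with the $(am)\times n$ matrix obtained by vertically concatenating $a$ identical copies of $\su^k$. First I would check that $\C'$ has the required parameters: each of the $am$ rows still has weight $w_i$ for the appropriate index $i$ (reading $\vw'$ as $\vw$ repeated $a$ times), and each column now has weight $al$, so $\C'$ is a $2$DMCWC$(am,n,d',\vw',al)$ for whatever distance $d'$ it actually attains. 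Since each row of one copy is repeated exactly $a$ times, $d_H(\su^j,\su^k)$ in the concatenated code is exactly $a$ times the original Hamming distance, so $d'\ge ad$, i.e. $\C'$ is a $2$DMCWC$(am,n,ad,\vw',al)$ of size $M$.

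The remaining, and main, step is to show this code is \emph{optimal}, i.e. that $M$ equals the relevant upper bound for the new parameters. Here I would invoke the hypothesis that the original code meets the bound in Lemma~\ref{bound} or Lemma~\ref{ibound}, and then verify that substituting $(m,d,\vw,l)\mapsto(am,ad,\vw',al)$ into that bound leaves it unchanged. For Lemma~\ref{bound}: the new total weight is $W'=\sum_{i}(w')_i = aW$, the new ``$\lambda$'' parameter is $\lambda'=W'-d'/2 = aW - ad/2 = a\lambda$, and $\sum_i (w')_i^2 = a\sum_i w_i^2$. Hence
\[
\frac{n(nl'-\lambda')}{\sum_i (w')_i^2 - n\lambda'}
= \frac{n(n\cdot al - a\lambda)}{a\sum_i w_i^2 - n\cdot a\lambda}
= \frac{a\,n(nl-\lambda)}{a(\sum_i w_i^2 - n\lambda)}
= \frac{n(nl-\lambda)}{\sum_i w_i^2 - n\lambda},
\]
so the bound value is literally the same number $M$; the positivity condition $\sum_i(w')_i^2 - n\lambda' = a(\sum_i w_i^2 - n\lambda) > 0$ is inherited. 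Thus if $\C$ met this bound, so does $\C'$.

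For Lemma~\ref{ibound} I would argue the analogous scaling. With $M$ fixed, $f' = \lfloor M\,l'/(am)\rfloor = \lfloor Ml/m\rfloor = f$ and $r' = Ml' - (am)f' = a(Ml - mf) = ar$. The inequality characterizing the bound, $\lambda'M(M-1) \ge n(amf'^2 + 2r'f' + r') - Mnl'$, becomes $a\lambda M(M-1) \ge n(a m f^2 + 2ar f + ar) - aMnl = a\big(n(mf^2+2rf+r) - Mnl\big)$, which after dividing by $a$ is exactly the original inequality; hence the same $M$ is the largest integer satisfying it. The only subtlety to confirm is that the largest admissible $M$ for the $a$-fold parameters is not strictly larger than for the original parameters — but since every term scales by the common factor $a$, the inequality is equivalent to the original one and the extremal $M$ is unchanged. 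So $\C'$ attains the bound and is optimal, as claimed. The one place warranting care is the bookkeeping that $f'=f$ and $r'=ar$ when $M$ is held fixed (rather than re-optimized), since $f$ is defined via a floor; but $Ml'/(am) = Ml/m$ exactly, so no rounding mismatch arises.
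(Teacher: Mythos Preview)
Your proposal is correct and follows essentially the same approach as the paper: construct the new code by vertically concatenating $a$ copies of each codeword, then verify that the upper bounds of Lemma~\ref{bound} and Lemma~\ref{ibound} are invariant under the parameter substitution $(m,\lambda,\vw,l)\mapsto(am,a\lambda,\vw',al)$ because every term scales by the common factor $a$. Your treatment is in fact more explicit than the paper's (you write out the Lemma~\ref{bound} computation and flag the floor issue $Ml'/(am)=Ml/m$ exactly), whereas the paper simply asserts that ``$M$ and $M_1$ satisfy the same inequality'' for Lemma~\ref{bound}; the order of presentation is reversed (construction first versus bound-check first), but the argument is the same.
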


\begin{proof}First, we check that the two codes have the
same upper bounds of sizes. Suppose that a $2$DMCWC$(m,n,d,\vw,l)$
corresponds to a DRP$(M,\lambda;\vw,l;m,n)$ and a
$2$DMCWC$(am,n,ad,\vw',al)$ corresponds to a
DRP$(M_1,\lambda_1;\vw',al;am,n)$, then $\lambda_1=a\lambda$. By
Lemma~~\ref{bound}, $M$ and $M_1$ satisfy the same inequality. Now
we check for Lemma~\ref{ibound}. Let
$f_1=\left\lfloor\dfrac{M_1al}{am}\right\rfloor=\left\lfloor\dfrac{M_1l}{m}\right\rfloor$
and $r_1=M_1al-amf_1=a(M_1l-mf_1)$. Then

\[\lambda_1 M_1(M_1-1)\ge n[(am-r_1)f_1^2+r_1(f_1+1)^2]-aM_1nl,\]
that is,
\[\lambda M_1(M_1-1)\ge n[(m-r_1/a)f_1^2+(r_1/a)(f_1+1)^2]-M_1nl.\]
Being considered as an inequality with indeterminate $M_1$, it is
exactly the same inequality as in Lemma~\ref{ibound}. Hence, $M$ and
$M_1$ have the same restrictions.

So if there exists an optimal $2$DMCWC$(m,n,d,\vw,l)$ $\C$, then we
can obtain an optimal $2$DMCWC$(am,n,$ $ad,\vw',al)$ by concatenating
each codeword from $\C$ $a$ times in the  vertical direction.
\end{proof}

By applying the same technique but concatenating each codeword in
$\C$ $b$ times in the  horizontal direction, we have that
$M(m,n,d,\vw,l)=M$ achieving the bound in Lemma~\ref{bound} or
Lemma~\ref{ibound} implies $M(am,bn,abd,b\vw',al)=M$. Hence, we only
consider the codes with parameters $m$, $l$ and $d$ (or $n$, $\vw$
and $d$) having no common divisors.

\begin{lemma}
For any positive integer $n$, $M(n,n,2n,1,1)=n$.
\end{lemma}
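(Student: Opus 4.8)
The plan is to recognize the quantity $M(n,n,2n,1,1)$ in two equivalent guises and then exhibit a matching construction. First, since $w=1$ in every part, a $2$DMCWC$(n,n,2n,1,\ell)$ with $\ell=1$ is just an MCWC$(n,n,2n,1)$ whose codewords, viewed as $n\times n$ binary matrices, have exactly one $1$ in every row and exactly one $1$ in every column — that is, each codeword is a permutation matrix. Moreover, the total weight is $W=\sum_{i\in I_n}w_i=n$, so by Lemma~\ref{spread} (the case $d=2\sum_i w_i$, here $2n=2W$) the distance condition forces any two codewords to have disjoint supports in each row, i.e. the corresponding permutations disagree in every coordinate. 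So the problem is equivalent to asking for the largest family of permutations of $I_n$ that are pairwise \emph{discordant} (differ in every position); equivalently, the largest set of mutually disjoint permutation matrices.

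The upper bound $M(n,n,2n,1,1)\le n$ is then immediate: each of the $n$ cells in the first column of the matrix can be occupied by at most one codeword (since codewords have disjoint column supports), and every codeword, being a permutation matrix, occupies exactly one cell of that column; hence at most $n$ codewords. This is also consistent with Lemma~\ref{spread}, which gives $T(\vn,d,\vw)=\min_{i\in I_n}\lfloor n_i/w_i\rfloor=n$ directly, and with Corollary~\ref{sw=2}-type counting; but the one-line column argument is cleanest.

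For the lower bound I would simply take the cyclic family: for $a\in\bz_n$ let $\su^a$ be the permutation matrix of the shift $x\mapsto x+a$, i.e. $\su^a_{ij}=1$ iff $j\equiv i+a\pmod n$. These $n$ matrices each have constant row weight $1$ and constant column weight $1$, and for $a\ne b$ the supports in row $i$ are the single cells $i+a$ and $i+b$, which differ; so the Hamming distance between $\su^a$ and $\su^b$ is exactly $2n$. Thus $\{\su^a:a\in\bz_n\}$ is a $2$DMCWC$(n,n,2n,1,1)$ of size $n$, and combined with the upper bound this gives $M(n,n,2n,1,1)=n$. (Alternatively one may invoke the DRP viewpoint: this family is precisely a DRP$(n,\lambda;1,1;n,n)$ with $W-\lambda=n$, i.e. $\lambda=0$, which is nothing but a Latin square of order $n$ written cell-by-cell — the array $R$ with $R_{ij}=\{i+j \bmod n\}$.)

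The only subtlety — hardly an obstacle — is bookkeeping the equivalence between the $2$DMCWC distance parameter $d=2n$ and the combinatorial conditions; once one notes $W=n$ so that $d=2W$ triggers the extremal case of Lemma~\ref{spread}, everything collapses to the permutation-matrix picture and the cyclic construction finishes it. No fractional-matching or design-existence machinery is needed here.
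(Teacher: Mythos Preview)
Your proof is correct and ultimately aligns with the paper's: both obtain the lower bound from a Latin square (you via explicit cyclic shifts, the paper by the one-line remark that a Latin square of order $n$ is a DRP$(n,0;1,1;n,n)$), and you even note this equivalence yourself at the end. The only mild difference is in the upper bound: the paper leaves it implicit (it follows from Lemma~\ref{bound} with $\sum w_i^2=n$, $\lambda=0$, $l=1$, giving $M\le n$), whereas you give the direct column-counting argument and also invoke Lemma~\ref{spread}; all three routes are equivalent here and equally short.
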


\begin{proof}
A Latin square of order $n$ is a DRP$(n,0;1,1;n,n)$.
\end{proof}

\vskip 10pt

The construction of $2$DMCWC$(nl,n,d,1,l)$s has been
investigated in many papers as equidistant frequency permutation
arrays and constant-composition codes
\cite{DingYin:2006,Huczynska:2010}. Most of the constructions can be
generalized to construct $2$DMCWC$(m,n,d,$ $w,l)$.


\begin{construction}\label{abibd}
If there exists an $\alpha$-resolvable BIBD$(M,k,\lambda)$ with
$r=\frac{(M-1)\lambda}{(k-1)\alpha}$ $\alpha$-parallel classes (each
has $b=\frac{\alpha M}{k}$ blocks), then for any pair of positive
integers $(s,t)$ with $st=r$, there exists a DRP$(M,b\lambda;\alpha
t,\alpha s;bs,bt)$, i.e., an optimal  $2$DMCWC$(bs,bt,d,\alpha
t,\alpha s)$ of  distance $d=2b(\alpha r-\lambda)$.
\end{construction}
\begin{proof} We construct the $bs\times bt$ array of a DRP$(M,b\lambda;\alpha
t,\alpha s;bs,bt)$ as follows.  For each $1\leq i\leq r$, let $A_i$
be a $b\times b$ array with the first column occupied by the blocks
in the $i$-th $\alpha$-parallel class, and other columns being
cyclic shifts of the first one. Then the $bs\times bt$ array $R$ is
formed by arranging all $A_i$'s into an $s\times t$ array. The
optimality is easy to be checked by Lemma~\ref{bound}.
\end{proof}

By Construction~\ref{abibd}, the existence of  $\alpha$-resolvable
BIBD$(M,k,\lambda)$'s implies the existence of optimal two
dimensional multiply constant-weight codes with certain parameters.
Necessary conditions for the existence of an $\alpha$-resolvable
BIBD$(M,k,\lambda)$ are (1) $\lambda(M-1)\equiv
0\pmod{\alpha(k-1)}$, (2) $\lambda M(M-1)\equiv 0\pmod{k(k-1)}$, and
(3) $\alpha M\equiv 0\pmod{k}$. For $k\in \{2,3,4\}$, the necessary
conditions are sufficient except for $(M,k,\lambda,\alpha)\in
\{(6,3,\lambda,1):\lambda\equiv 2\pmod{4}\} \cup\{(10,4,2,2)\}$
 \cite{Jungnickel:1991,Vasigaetal:2001}. Hence we can obtain several
 families of optimal two
dimensional multiply constant-weight codes by
Construction~\ref{abibd}.

\section{Conclusion}

Several new combinatorial constructions for multiply constant-weight
codes are given to yield new infinite families of optimal MCWCs with
small weight or small distance. The Johnson upper bounds of MCWCs
are shown to be asymptotically tight for given weights and distance,
which greatly improves the previous result saying that the bounds
are asymptotically tight to a constant factor and for a smaller
class of MCWCs. Finally, we introduce the concept of two dimensional
MCWCs, for which bounds and constructions are studied.


\bibliographystyle{IEEEtran}
\bibliography{I:/JabRefdata/mine}

\begin{thebibliography}{10}
\providecommand{\url}[1]{#1}
\csname url@samestyle\endcsname
\providecommand{\newblock}{\relax}
\providecommand{\bibinfo}[2]{#2}
\providecommand{\BIBentrySTDinterwordspacing}{\spaceskip=0pt\relax}
\providecommand{\BIBentryALTinterwordstretchfactor}{4}
\providecommand{\BIBentryALTinterwordspacing}{\spaceskip=\fontdimen2\font plus
\BIBentryALTinterwordstretchfactor\fontdimen3\font minus
  \fontdimen4\font\relax}
\providecommand{\BIBforeignlanguage}[2]{{%
\expandafter\ifx\csname l@#1\endcsname\relax
\typeout{** WARNING: IEEEtran.bst: No hyphenation pattern has been}%
\typeout{** loaded for the language `#1'. Using the pattern for}%
\typeout{** the default language instead.}%
\else
\language=\csname l@#1\endcsname
\fi
#2}}
\providecommand{\BIBdecl}{\relax}
\BIBdecl

\bibitem{Cheeetal:MCWC}
\BIBentryALTinterwordspacing
Y.~M. Chee, Z.~Cherif, J.-L. Danger, S.~Guilley, H.~M. Kiah, J.-L. Kim,
  P.~Sol\'e, and X.~Zhang, ``Multiply constant-weight codes and the reliability
  of loop physically unclonable functions.'' [Online]. Available:
  \url{http://arxiv.org/abs/1401.3928}
\BIBentrySTDinterwordspacing

\bibitem{CDGKS}
Z.~Cherif, J.-L. Danger, S.~Guilley, J.-L. Kim, and P.~Sol\'e, ``Multiply
  constant weight codes,'' in \emph{Information Theory Proceedings (ISIT), 2013
  IEEE International Symposium on}, 2013.

\bibitem{Johnson1972DM}
S.~Johnson, ``Upper bounds for constant weight error correcting codes,''
  \emph{Discrete Math.}, vol.~3, pp. 109--124, 1972.

\bibitem{Levenshtein:1971}
V.~I. Levenshtein, ``Upper-bound estimates for fixed-weight codes,''
  \emph{Problems of Inform. Transmission}, vol.~7, pp. 281--287, 1971.

\bibitem{Cherifetal:2012}
Z.~Cherif, J.-L. Danger, S.~Guilley, and L.~Bossuet, ``An easy-to-design puf
  based on a single oscillator: The loop puf,'' in \emph{Proceedings of the
  2012 15th Euromicro Conference on Digital System Design}, ser. DSD '12, 2012,
  pp. 156--162.

\bibitem{Pappuetal:2002}
R.~Pappu, B.~Recht, J.~Taylor, and N.~Gershenfeld, ``Physical one-way
  functions,'' \emph{Science}, vol. 297, no. 5589, pp. 2026--2030, 2002.

\bibitem{Gassendetal:2002}
B.~Gassend, D.~Clarke, M.~van Dijk, and S.~Devadas, ``Silicon physical random
  functions,'' in \emph{Proceedings of the 9th ACM Conference on Computer and
  Communications Security}, ser. CCS '02, 2002, pp. 148--160.

\bibitem{SuhDevadas:2007}
G.~E. Suh and S.~Devadas, ``Physical unclonable functions for device
  authentication and secret key generation,'' in \emph{Proceedings of the 44th
  Annual Design Automation Conference}, ser. DAC '07, 2007, pp. 9--14.

\bibitem{Ordentlichetal:2000}
E.~Ordentlich and R.~Roth, ``Two-dimensional weight-constrained codes through
  enumeration bounds,'' \emph{IEEE Trans. Inform. Theory}, vol.~46, no.~4, pp.
  1292--1301, Jul 2000.

\bibitem{Ordentlichetal:2012}
------, ``Low complexity two-dimensional weight-constrained codes,'' \emph{IEEE
  Trans. Inform. Theory}, vol.~58, no.~6, pp. 3892--3899, June 2012.

\bibitem{Cheeetal:2013c}
Y.~M. Chee, H.~M. Kiah, and P.~Purkayastha, ``Matrix codes and multitone
  frequency shift keying for power line communications,'' in \emph{Information
  Theory Proceedings (ISIT), 2013 IEEE International Symposium on}, July 2013,
  pp. 2870--2874.

\bibitem{BB2011DM}
R.~F. Bailey and A.~C. Burgess, ``Generalized packing designs,'' \emph{Discrete
  Math.}, vol. 313, no.~11, pp. 1167--1190, 2013.

\bibitem{kahn:1996}
J.~Kahn, ``A linear programming perspective on the
  {F}rankl-{R}\"odl-{P}ippenger theorem,'' \emph{Random Structures Algorithms},
  vol.~8, no.~2, pp. 149--157, 1996.

\bibitem{Etzion:2008}
T.~Etzion, ``Optimal doubly constant weight codes,'' \emph{J. Combin. Des.},
  vol.~16, no.~2, pp. 137--151, 2008.

\bibitem{Agrelletal:2000}
E.~Agrell, A.~Vardy, and K.~Zeger, ``Upper bounds for constant-weight codes,''
  \emph{IEEE Trans. Inform. Theory}, vol.~46, pp. 2373--2395, 2000.

\bibitem{BSJSS1990IEEETIT}
A.~E. Brouwer, J.~B. Shearer, N.~J.~A. Sloane, and W.~D. Smith, ``A new table
  of constant weight codes,'' \emph{IEEE Trans. Inform. Theory}, vol.~36,
  no.~6, pp. 1334--1380, 1990.

\bibitem{SHP2006EJC}
D.~H. Smith, L.~A. Hughes, and S.~Perkins, ``A new table of constant weight
  codes of length greater than 28,'' \emph{Electron. J. Combin.}, vol.~13,
  no.~1, pp. Article 2, 18 pp. (electronic), 2006.

\bibitem{Cameron:2009}
P.~J. Cameron, ``A generalisation of {$t$}-designs,'' \emph{Discrete Math.},
  vol. 309, no.~14, pp. 4835--4842, 2009.

\bibitem{BJ2014}
\BIBentryALTinterwordspacing
J.~Bao and L.~Ji, ``The completion of optimal $(3,4)$-packings.'' [Online].
  Available: \url{http://arxiv.org/abs/1401.2022}
\BIBentrySTDinterwordspacing

\bibitem{Lu:1983}
J.~X. Lu, ``On large sets of disjoint {S}teiner triple systems. {I}, {II},
  {III},'' \emph{J. Combin. Theory Ser. A}, vol.~34, no.~2, pp. 140--146,
  147--155, 156--182, 1983.

\bibitem{Lu:1984}
------, ``On large sets of disjoint {S}teiner triple systems. {IV}, {V},
  {VI},'' \emph{J. Combin. Theory Ser. A}, vol.~37, no.~2, pp. 136--163,
  164--188, 189--192, 1984.

\bibitem{Teirlinck:1991}
L.~Teirlinck, ``A completion of {L}u's determination of the spectrum for large
  sets of disjoint {S}teiner triple systems,'' \emph{J. Combin. Theory Ser. A},
  vol.~57, pp. 302--305, 1991.

\bibitem{Etzion:1992a}
T.~Etzion, ``Optimal partitions for triples,'' \emph{J. Combin. Theory Ser. A},
  vol.~59, no.~2, pp. 161--176, 1992.

\bibitem{Etzion:1992b}
------, ``Partitions of triples into optimal packings,'' \emph{J. Combin.
  Theory Ser. A}, vol.~59, no.~2, pp. 269--284, 1992.

\bibitem{Ji:2006}
L.~Ji, ``Partition of triples of order {$6k+5$} into {$6k+3$} optimal packings
  and one packing of size {$8k+4$},'' \emph{Graphs Combin.}, vol.~22, no.~2,
  pp. 251--260, 2006.

\bibitem{Cayley:1850}
A.~Cayley, ``Iv. on the triadic arrangements of seven and fifteen things,''
  \emph{Philosophical Magazine Series 3}, vol.~37, no. 247, pp. 50--53, 1850.

\bibitem{DingYin:2006}
C.~Ding and J.~Yin, ``A construction of optimal constant composition codes,''
  \emph{Des. Codes Cryptogr.}, vol.~40, no.~2, pp. 157--165, 2006.

\bibitem{Huczynska:2010}
S.~Huczynska, ``Equidistant frequency permutation arrays and related constant
  composition codes,'' \emph{Des. Codes Cryptogr.}, vol.~54, no.~2, pp.
  109--120, 2010.

\bibitem{Jungnickel:1991}
D.~Jungnickel, R.~Mullin, and S.~Vanstone, ``The spectrum of
  $\alpha$-resolvable block designs with block size $3$,'' \emph{Discrete
  Mathematics}, vol.~97, no. 1-3, pp. 269--277, 1991.

\bibitem{Vasigaetal:2001}
T.~M.~J. Vasiga, S.~Furino, and A.~C.~H. Ling, ``The spectrum of
  {$\alpha$}-resolvable designs with block size four,'' \emph{J. Combin. Des.},
  vol.~9, no.~1, pp. 1--16, 2001.

\end{thebibliography}
%
%

%
%
%
%
%




\end{document}